\newcommand\vv[1]{\underaccent{\vec}{#1}}
\newcommand{\ketbra}[2]{|{#1}\rangle \langle {#2}|}
\newcommand{\is}[4]{\langle #1 \,  | \, #2 \rangle \langle #3 \, | \, #4  \rangle}
\newcommand{\pk}[3]{\langle #1 \,  | \, #2 \, | #3  \rangle}
\newcommand{\ir}[3]{\langle #1 \,  | \, #2 \rangle \langle #2 \, | \, #3  \rangle}
\newcommand{\ketbrat}[4]{| \widetilde{#1}_{#2} \rangle \langle \widetilde{#3}_{#4}|}
\newcommand{\bra}[1]{\langle {#1}|}
\newcommand{\ket}[1]{| {#1} \rangle}
\newcommand{\tket}[2]{\ket{{\widetilde{#1}}_{#2}}}
\newcommand{\braket}[2]{\langle #1 | #2 \rangle}
\newcommand{\tbraket}[4]{\langle \widetilde{#1}_{#2} | \widetilde{#3}_{#4} \rangle}
\newcommand{\brat}[2]{ \langle \widetilde{#1}_{#2}|}
\newtheorem{mydef}{Definition}
\newtheorem{theorem}{Theorem}[section]
\begin{document}
\date{}
\title{Minimum Error Discrimination for an Ensemble of Linearly Independent Pure States}
\author[1]{Tanmay Singal\thanks{stanmay@imsc.res.in}}
\author[1]{Sibasish Ghosh\thanks{sibasish@imsc.res.in}}
\affil[1]{Optics and Quantum Information Group,The Institute of Mathematical Sciences, CIT Campus, Taramani, Chennai, 600 113, India}
\providecommand{\keywords}[1]{\textbf{\textit{Keywords:---}} #1}

\maketitle

\begin{abstract}
Inspired by the work done by Belavkin [\url{Belavkin V. P., Stochastics, 1, 315 (1975)}], and independently by Mochon, [\url{Phys. Rev. A 73, 032328, (2006)}], we formulate the problem of minimum error discrimination of any ensemble of $n$ linearly independent pure states by stripping the problem of its rotational covariance and retaining only the rotationally invariant aspect of the problem. This is done by embedding the optimal conditions in a matrix equality as well as matrix inequality. Employing the implicit function theorem in these conditions we get a set of first-order coupled ordinary non-linear differential equations which can be used to drag the solution from an initial point (where solution is known) to another point (whose solution is sought). This way of obtaining the solution can be done through a simple Taylor series expansion and analytic continuation when required. Thus, we \emph{complete} the work done by Belavkin and Mochon by ultimately leading their theory to a solution for the minimum 
error 
discrimination problem of linearly independent pure state ensembles. We also compare the computational complexity of our technique with a barrier-type interior point method of SDP and show that our technique is computationally as efficient as (actually, a bit more than) the SDP algorithm, with the added advantage of being much simpler to implement.
\end{abstract}

\keywords{minimum error discrimination, quantum state discrimination, linearly independent pure state ensembles, pretty good measurement, implicit function theorem}

\section{\label{intro}Introduction}
In the class of quantum state discrimination problems minimum error discrimination (MED) is one of the oldest. The problem arises because nonorthogonal states aren't perfectly distinguishable. Thus any measurement aimed at distinguishing among states cannot hope to do so without some error. Different measurement strategies have different performance strength (measured in terms of the average probability of success). Given that the states cannot be distinguished perfectly there must be some measurement criterion which gives the maximum probability of success. To find what this measurement strategy is, is the problem of MED.

The setting in MED is as follows: Alice has a fixed ensemble of states $\{ \rho_1, \rho_2, \cdots, \rho_n\}$ where $\rho_i$ are positive semi-definite operators of trace $1$ acting on some Hilbert space $\mathcal{H}$ of dimension $n$. She selects one of these states ($\rho_i$, say) with probability $p_i  \in \{p_1, p_2, \cdots, p_n\}$ ($p_i\; >\;0,\; \sum_{i=1}^n \;p_i\; =\;1$, $p_i$'s are refered to as apriori probabilities) and gives it to Bob. Bob knows that Alice has selected the state from the set $\{\rho_i\}_{i=1}^{n}$ with apriori probabilities $p_i$ and his job is to figure out which state he has been given using an $n$-element POVM. In MED, Bob's measurement strategy is constrained in the following way: there is a one-to-one correspondence between elements in Alice's ensemble $\{ p_i, \rho_i \}_{i=1}^{n}$ and Bob's POVM elements $\{ E_i\}_{i=1}^{n}$ (where $E_i \ge 0$, and $\sum_{i=1}^{n} E_i = \mathbb{1}_n$, which is the identity operator acting on $\mathcal{H}$) so that when the $i$-th measurement 
outcome clicks, Bob infers Alice gave him the $i$-th state 
from her ensemble. Since $\rho_1$, $\rho_2$, $\cdots$, $\rho_n$ don't generally lie on orthogonal supports, errors are likely to occur. Bob's job is to find the optimal POVM for minimizing the average probability of this error or equivalently maximizing the average probability of success. 

There are other variants to the quantum state discrimination problem \cite{Per}, \cite{Croke}, \cite{Walgate}. The most popular among them is called unambiguous state discrimination, in which, if one has to perform state discrimination over an ensemble of $n$ states $\{ p_i, \rho_i \}_{i=1}^{n}$, the measurement strategy used has $n+1$ outcomes, where, just as in the MED case, there is a one-to-one correspondence between ensemble elements $\rho_i$ and the $i$-th POVM element $E_i$. Furthermore, the POVM must be constrained so that when Alice sends Bob the $i$-th state, the $j$-th POVM element won't click where $ j \neq i, n+1$. The trade-off is that Bob can say nothing about the state which Alice gave him when the $(n+1)$-th POVM element clicks. Heuristically one expects that one cannot discriminate unambiguously among a set of linearly dependent states; this was proven true later \cite{Chef}.  

Coming back to MED, necessary and sufficient conditions for the optimal POVM for any ensemble were given by Holevo \cite{Hol} and Yuen et al. \cite{Yuen} independently. Yuen et al. cast MED into a convex optimization problem for which numerical solutions are given in polynomial time\footnote{That is, polynomial in $dim \mathcal{H}$.}. While there are quite a number of numerical techniques to obtain the optimal POVM \cite{Eldar, Jezek, Hel2, Tyson1}, for very few ensembles has the MED problem been solved analytically. Some of these include an ensemble of two states \cite{Hel}, ensembles whose density matrix is maximally mixed state \cite{Yuen}, equiprobable ensembles that lie on the orbit of a unitary, \cite{Ban,Chou,BS}, and mixed qubit states \cite{Bae, Ha}. In \cite{Kwon}, many interesting properties of the MED problem have been elucidated using geometry of $N$ qudit states. An upper bound for the optimal success probability was given in \cite{Bae1}. Comparing the existing results for MED with 
that of the unambiguous state discrimination problem, it is seen that the latter has been solved for more kinds of ensembles ensembles than the former \cite{Per, Chef,Pang, Bergou, Ray, Herzog,Janos, Som}. 

\textbf{Linearly Independent Pure State Ensembles:} For an ensemble of $n$  linearly independent pure states ($n$-LIP), given by $\widetilde{P} \equiv \{p_i, \ketbra{\psi_i}{\psi_i} \}_{i=1}^{n}$ (where $\ket{\psi_i}$'s span $\mathcal{H}$), certain properties which the optimal POVM should satisfy has been given in the literature on MED already:

\begin{itemize}
 \item [(i)] The optimal POVM is a unique rank one projective measurement \cite{Ken,Hel,Carlos}.
 \item [(ii)] The optimal POVM for MED of $\widetilde{P}$ is the pretty good measurement (PGM) of another ensemble, $\widetilde{Q}\equiv \left\{ q_i>0, \ketbra{\psi_i}{\psi_i} \right\}_{i=1}^{n}$\footnote{While (i) is subsumed by (ii) (as the PGM of the ensemble $\widetilde{Q}$ is a rank-one projective measurement), it is beneficial to, separately, emphasize it.} \cite{Bela,Mas,Carlos}. Note that the $i$-th state in $\widetilde{P}$ and $\widetilde{Q}$ are the same for all $1 \le i \le n$, whereas the probabilities are generally not. Additionally, in \cite{Carlos}, it is explicitly shown that the ensembles $ \widetilde{P}, \widetilde{Q}$ are related through an invertible map. 
\end{itemize}

To formalize this invertible relation between $\widetilde{P}$ and $\widetilde{Q}$ we will now introduce a few definitions.

\begin{mydef}
\label{DefEns}
$\mathcal{E}$ is the set of all ensembles comprising of $n$ LI pure states. Hence, any ensemble in $\mathcal{E}$ is of the form  $\widetilde{P} = \{ p_i>0, \ketbra{\psi_i}{\psi_i} \}_{i=1}^{n}$ where $\ket{\psi_1},\ket{\psi_2},\cdots,\ket{\psi_n}$ are LI.\end{mydef}

$\mathcal{E}$ is a $(2n^2-n-1)$ real parameter set.

\begin{mydef}
\label{DefPro}
$\mathcal{P}$ is the set of all rank one projective measurements on the states of $\mathcal{H}$; an element in $\mathcal{P}$ is of the form $\{ \ketbra{v_i}{v_i} \}_{i=1}^{n}$ where $\braket{v_i}{v_j}=\delta_{ij},\; \forall \; 1 \leq i,j \leq n$.
\end{mydef}

$\mathcal{P}$ is an $n(n-1)$ real parameter set. From point (i) above we see that the optimal POVM for $\widetilde{P} \in \mathcal{E}$ is a unique element in $\mathcal{P}$. Thus, one can define the \emph{optimal POVM map}, $\mathscr{P}$, in the following way: 
\begin{mydef}
 \label{mathscrP}
$\mathscr{P}: \mathcal{E} \longrightarrow \mathcal{P}$ is such that $ \mathscr{P}\left( \widetilde{P} \right)$ is \emph{the} optimal POVM for MED of $\widetilde{P}\in \mathcal{E}$. 
\end{mydef}

Let $PGM$ denote the PGM map, i.e., $PGM:\mathcal{E} \longrightarrow \mathcal{P}$ is such that $PGM\left( \widetilde{Q} \right)$ is the PGM of $\widetilde{Q} \in \mathcal{E}$, i.e. (refer to \cite{Carlos}), $PGM\left(\widetilde{Q}\right)=\left\{\rho_q^{-\frac{1}{2}}q_i \ketbra{\psi_i}{\psi_i}\rho_q^{-\frac{1}{2}} \right\}_{i=1}^{n}$, where $\rho_q = \sum_{i=1}^{n} q_i \ketbra{\psi_i}{\psi_i}$. 

Then (ii) above says that there exists an invertible map, which we label by $\mathscr{R}$, which can be defined in the following way:

\begin{mydef}
\label{mathRdef}
$\mathscr{R}: \mathcal{E} \longrightarrow \mathcal{E}$ is a bijection such that \begin{equation}  \label{mathRR}  \mathscr{P}\left( \widetilde{P} \right) =PGM \left( \mathscr{R} \left( \widetilde{P} \right) \right), \; \forall \; \widetilde{P} \in \mathcal{E}.\end{equation}

\end{mydef}

Knowing $\mathscr{R}$ would solve the problem of MED for LI pure state ensembles. While the existence of the invertible function $\mathscr{R}$ has been proven \cite{Bela,Carlos}, unfortunately, it isn't known - neither analytically nor computationally for arbitrary ensemble $\widetilde{P}$. Fortunately $\mathscr{R}^{-1}$ is known \cite{Mas, Bela, Carlos} i.e., having fixed the states $\{ \ket{\psi_i} \}_{i=1}^{n}$ one can give $p_i$ in terms of the $q_i$: let $G_q >0$ represent the gram matrix of the ensemble $\widetilde{Q}$, i.e., $\left( {G_q} \right)_{ij}=\sqrt{q_i q_j} \braket{\psi_i}{\psi_j}$, and let  ${G_q}^{\frac{1}{2}}$ represent the positive square root of $G_q$; let $G$ denote the gram matrix of $\widetilde{P}$, i.e., $G_{ij} = \sqrt{p_ip_j}\braket{\psi_i}{\psi_j}, \; \forall \; 1 \leq i, j \leq n$; then diagonal elements of $G$ can be written as functions of $q_i$ and matrix elements of $G_q^\frac{1}{2}$
\begin{equation}
\label{piqi}
G_{ii} = p_i = C \frac{q_i}{ \left( {G_q}^{\frac{1}{2}} \right)_{ii} } , \; \forall \; 1 \leq i \leq n,
\end{equation}
where $C$ is the normalization constant\footnote{$q_i >0, \; \forall \; 1 \leq i \leq n$. This comes from the definition of $\mathcal{E}$ and that $\{ q_i, \ketbra{\psi_i}{\psi_i} \}_{i=1}^{n} \in \mathcal{E}$. Also, $\left( G^\frac{1}{2} \right)_{ii} > 0, \; \forall \; 1 \leq i \leq n$. This is because $G^\frac{1}{2}$, being the positive square root of $G$ (gram matrix for the linearly independent vectors $\{ \sqrt{q_i} \ket{\psi_i} \}_{i=1}^{n}$) is positive definite and the diagonal elements of a positive definite matrix have to be greater than zero.}, $$C = \left(\sum_{j=1}^{n}  \dfrac{q_j}{\left( {G_q}^{\frac{1}{2}} \right)_{jj}} \right)^{-1}.$$ 

This tells us what $\mathscr{R}^{-1}$ is: $$\mathscr{R}^{-1}\left( \left\{ q_i, \ketbra{\psi_i}{\psi_i} \right\}_{i=1}^{n} \right) =\left\{ p_i, \ketbra{\psi_i}{\psi_i} \right\}_{i=1}^{n},$$ where $p_i$ and $q_i$ are related by equation \eqref{piqi}.

It is more convenient to define $\mathscr{R}^{-1}$ and $\mathscr{R}$ on the set of gram matrices, which we will denote by $\mathcal{G}$. 
\begin{mydef}
\label{defG}
$\mathcal{G}$ is the set of all positive definite $n \times n$ matrices of trace one. 
\end{mydef}

Note\footnote{Associating each $G \in \mathcal{G}$ with an $n \times n$ density matrix of rank $n$, we see that $\mathcal{G}$ is the same as the interior of the generalized Bloch sphere for $n$ dimensional systems. Hence $\mathcal{G} \subset \mathbb{R}^{n^2-1}$.} that $\mathcal{G}$ is convex and is also open in $\mathbb{R}^{n^2-1}$.

Define $\mathscr{R}_\mathcal{G}^{-1}:\mathcal{G} \longrightarrow \mathcal{G}$ by $\mathscr{R}_\mathcal{G}^{-1}(G_q) = G$, using relation \eqref{piqi}. We know that $\mathscr{R}^{-1}$ is invertible on $\mathcal{E}$ (from \cite{Carlos}); this implies that $\mathscr{R}_\mathcal{G}^{-1}$ is invertible on $\mathcal{G}$, i.e., $\mathscr{R}_\mathcal{G}$ exists. Equation \eqref{piqi} tells us that $\mathscr{R}_\mathcal{G}^{-1}$ is continuous in $\mathcal{G}$. Since $\mathcal{G} \subset \mathbb{R}^{n^2-1}$ is open\footnote{The topology of $\mathcal{G}$ is that which is induced on it by the Hilbert-Schmidt norm. Note that this is equivalent to the Euclidean metric of $\mathbb{R}^{n^2-1}$}, the invariance of domain theorem \cite{Spivak1} tells us that $\mathscr{R}_\mathcal{G}^{-1}$ is a homeomorphism on $\mathcal{G}$. This means that $\mathscr{R}_\mathcal{G}$ is also continuous on $\mathcal{G}$. 

To be able to express what $\mathscr{R}$ is one needs to be able to solve the $n$ equations \eqref{piqi} for $q_i$ in terms of $p_j$'s and $\ket{\psi_j}$'s. These equations are too complicated for one to hope to solve: to begin with one doesn't even have an explicit closed form expression for $G^{\frac{1}{2}}$ in terms of the matrix elements of $G$ for arbitrary $n$. Even for the cases when $n = 3,4$, where one \emph{can} obtain such a closed form expression for $G^{\frac{1}{2}}$, the nature of the equations is too complicated to solve analytically. This tells us that it is hopeless to obtain $q_i$ as a closed form expression in terms of $\{ p_i, \ket{\psi_i} \}_{i=1}^{n}$. A similar sentiment was expressed earlier \cite{Tyson}. While a closed form expression of the solution seems too difficult to obtain (and even if obtained, too cumbersome to appreciate) giving an \emph{efficient technique} to compute $q_i$ from $\{ p_i, \ket{\psi_i} \}_{i=1}^{n}$ establishes that the relation given by equation \eqref{mathRR} 
along with technique (to compute $q_i$) provides a solution for MED of an ensemble of $n$-LIPs. 

To achieve such a technique we recast the MED problem for any ensemble $\widetilde{P}$ in terms of a matrix equation and a matrix inequality using the gram matrix $G$ of $\widetilde{P}$. The matrix equation and the inequality are equivalent to the optimality conditions that the optimal POVM has to satisfy, i.e., the optimal conditions given by Yuen et al \cite{Yuen}. Recasting the problem in this form gives us three distinct benefits.
\begin{itemize}                                                                                                                                                                                                                                                                                                                                                                                                                                                               \item[(1)] It helps us to \emph{explicitly} establish that the optimal POVM for $\widetilde{P}$ is given by the PGM 
of another ensemble of the form $\widetilde{Q}$ (i.e., relation in equation \eqref{mathRR} is made explicit in the matrix equality and matrix inequality conditions).
\item[(2)] MED is actually a rotationally invariant problem, i.e., the optimal POVM, $\{ E_i \}_{i=1}^{n}$, varies covariantly under a unitary transformation, $U$, of the states: $$\ketbra{\psi_i}{\psi_i} \rightarrow U\ketbra{\psi_i}{\psi_i}U^\dag \Longrightarrow E_i\rightarrow U E_i U^\dag. $$ This makes it desirable to subtract out the rotationally covariant aspect of the solution and, so, cast the problem in a rotationally invariant form. This is achieved through the aforesaid matrix equality and inequality.
\item[(3)] It gives us a technique to compute $q_i$.                   
\end{itemize}

For (3) we need to compute $\mathscr{R}_\mathcal{G}(G)$ for a given $G \in \mathcal{G}$. This is done by using the analytic implicit function theorem which tells us that $\mathscr{R}_\mathcal{G}$ is an analytic function on $\mathcal{G}$. Specifically, we will vary $G$ from one point in $\mathcal{G}$, at which we know what the action of $\mathscr{R}_\mathcal{G}$ is, to another point in $\mathcal{G}$, at which we want to know what the action of $\mathscr{R}_\mathcal{G}$ is. 

Further on, since our technique rests on the theory of the MED problem for $n$-LIP ensembles, it is expected that the algorithm our technique offers is computationally as efficient as or more efficient than existing techniques. We show that this is indeed the case, particularly by directly employing Newton's method to solve the matrix inequality. This adds to the utility of our technique.

The paper is divided into the following sections. In Section \ref{MEDP} we go into detail about what MED is and elaborate on the optimality conditions, and specify what they look like for $n$-LIPs. In Section \ref{SSGMM} we recast the MED problem for LI pure states in a rotationally invariant form. In Section \ref{empIFT} IFT is employed to solve the rotationally invariant conditions, which were developed in the previous section; in subsection \ref{algcompl} of section \ref{empIFT} the computational complexity of our algorithm is compared with a standard SDP technique. We conclude the paper in section \ref{conclusion}.

\section{The MED Problem: The Conditions of Optimality}

\label{MEDP}

When the states $\ket{\psi_i}$ are pairwise orthogonal, i.e., $\braket{\psi_i}{\psi_j} =0, \, \forall \, 1 \leq i,j \leq n$, one can perfectly distinguish among them by performing the projective measurement $\{\ketbra{\psi_i}{\psi_i}\}_{i=1}^{n}$. In general the states $\{\ket{\psi_i} \}_{i=1}^n$ aren't pairwise orthogonal and in such a case, it may happen that despite being given $\ket{\psi_i}$, one's measurement output is $j$, leading to an error. The average probability of such errors, $P_e$, is then given by
\begin{subequations}
\begin{equation}
\label{Pe}
P_e = \sum_{\substack{i,j=1 \\ i\neq j}}^n p_i \bra{\psi_i} E_j \ket{\psi_i},
\end{equation} and the average probability of success $P_s$ is given by
\begin{equation}
\label{Ps}
P_s =  \sum_{i=1}^n p_i \bra{\psi_i} E_i \ket{\psi_i},
\end{equation}
\end{subequations}
where $\{ E_j \}_{j=1}^{n}$ represents an $n$ element POVM ($n$-POVM). Note that $ P_s + P_e =1$. Our task is to maximize $P_s$ by choosing an appropriate POVM in the set of $n$-element POVMS. We refer to the maximum value of $P_s$ as $P_s^{{max}}$. The maximum success probability $P_s^{max}$ is given by
\begin{equation}
\label{Pmax}
\begin{split}
P_{s}^{max}& = \text{Max} \;  \{P_s | \; \{ E_j \}_{j=1}^{n} \text{is an $n$-POVM}\}.
\end{split}
\end{equation}
The set of $n$-POVMs is convex, i.e., if $\{E_i\}_{i=1}^{n}$ and $\{E'_i\}_{i=1}^{n}$ are $n$-POVMs, then so is $\{ p E_i + (1-p) E'_i \}_{i=1}^{n}$, $\forall \; 0 \leq p \leq 1$. Hence MED is a constrained convex optimization problem. To every such a constrained convex optimization problem (called the primal problem) there is a corresponding dual problem which provides a lower bound (if the primal problem is a constrained minimization) or an upper bound (if the primal problem is a constrained maximization) to the quantity being optimized (called the objective function). Under certain conditions these bounds are tight implying that one can obtain the solution for the primal problem from its dual. We then say that there is no duality gap between both problems. For MED, there is no duality gap between the primal and dual problems; thus the dual problem can be solved to obtain the optimal POVM \cite{Yuen}. The dual problem is given as follows \cite{Yuen}:
 
\begin{equation}
\label{dual}
\text{Min} \:  \text{Tr}(Z), \; \text{subject to: }  Z \geq p_i \ketbra{\psi_i}{\psi_i}\, , \; \forall \; 1 \leq i \leq n.
\end{equation}
Then $P_s^{max}$ is given by $P_s^{max} = \text{ Min } Tr(Z)$.

Also the optimal $n$-POVM, $\{E_i\}_{i=1}^{n}$ will satisfy the complementary slackness condition:

\begin{equation}
\label{cslack}
(Z- p_i \ketbra{\psi_i}{\psi_i}) E_i =0, \, \forall \, 1\leq i \leq n.
\end{equation}

Now summing over $i$ in equation \eqref{cslack} and using the fact that $ \sum_{i=1}^{n} E_i = \mathbb{1}_n$ we get the following.

\begin{equation}
\label{Z}
 Z  = \sum_{i=1}^{n} p_i \ketbra{\psi_i}{\psi_i} E_i = \sum_{i=1}^{n} E_i p_i \ketbra{\psi_i}{\psi_i}.
\end{equation}

From equation \eqref{cslack} we get

\begin{equation}
\label{St}
  E_j \left( p_j \ketbra{\psi_j}{\psi_j} -  p_i \ketbra{\psi_i}{\psi_i} \right) E_i  =0, \; \forall \; 1 \leq i,j \leq n.
\end{equation}

Conditions \eqref{cslack} and \eqref{St} are equivalent to each other. $Z$, given by equation \eqref{Z}, has to satisfy another condition

\begin{equation}
\label{Glb}
 Z \geq p_i \ketbra{\psi_i}{\psi_i}, \;  \forall \; 1  \leq i \leq n.
\end{equation}
 
Thus the \emph{necessary and sufficient} conditions for the $n$-element POVM $\{ E_i \}_{i=1}^{n}$ to maximize $P_s$ are given by conditions \eqref{St} (or \eqref{cslack}) and \eqref{Glb} together. 
\section{Rotationally Invariant Necessary and Sufficient Conditions for MED}
\label{SSGMM}

We wish to obtain the optimal POVM (which is a rank-one projective measurement) for MED of an ensemble $\widetilde{P} = \{p_i,\ketbra{\psi_i}{\psi_i} \}_{i=1}^{n}$, where $\{ \ket{\psi_i} \}_{i=1}^{n}$ is a LI set. Let $ \tket{\psi}{i} \equiv \sqrt{p_i} \ket{\psi_i} , \; \forall \; 1\leq i \leq n$. Since $\{ \tket{\psi}{i} \}_{i=1}^{n}$ is a LI set, corresponding to this set there exists a \emph{unique} set of vectors $\{ \tket{u}{i} \}_{i=1}^{n} \subset \mathcal{H}$ such that\footnote{Given a set of $n$ LI vectors $\left\{ \tket{\psi}{i} \right\}_{i=1}^{n}$ one can obtain the corresponding set of vectors $\{ \tket{u}{i} \}_{i=1}^{n}$ in the following way: fix a basis to work in, arrange $\brat{\psi}{i}$ as rows in a matrix which we call $V$. $V$ is invertible since its rows are LI. The columns of $V^{-1}$ correspond to the vectors $\tket{u}{i}$ in our chosen basis.}:

\begin{equation}
\label{orth}
\tbraket{\psi}{i}{u}{j}= \delta_{ij}, \; \forall \; 1 \leq i,j \leq n.
\end{equation}

Let $G$ denote the gram matrix of $\left\{ \tket{\psi}{i} \right\}_{i=1}^{n}$. The matrix elements of $G$ are hence given by
\begin{equation}
\label{Gram}
G_{ij} = \tbraket{\psi}{i}{\psi}{j}, \; \forall \; 1 \leq i,j \leq n.
\end{equation}

$Tr(G)=1$. Since $\left\{ \tket{\psi}{i} \right\}_{i=1}^{n}$ is a LI set, $G > 0$. The gram matrix corresponding to the set $\{ \tket{u}{i} \}_{i=1}^{n}$ is $G^{-1}$. \emph{Any} orthonormal basis $\{ \ket{v_i} \}_{i=1}^{n}$ of $\mathcal{H}$ can be represented as:

\begin{equation}
\label{v}
\ket{v_i} = \sum_{\substack{j=1}}^{n} \left(G^{\frac{1}{2}} U\right)_{ji} \tket{u}{j},
\end{equation}

where $G^{\frac{1}{2}}$ is the positive square root of $G$ and $U$ is an $n\times n$ unitary matrix. $U$ captures the unitary degree of freedom of the orthonormal basis $\{ \ket{v_i}\}_{i=1}^{n}$. Any such orthonormal basis corresponds to a rank-one projective measurement:

\begin{equation}
\label{btp}
\{ \ket{v_i} \}_{i=1}^{n} \longrightarrow  \{ \ketbra{v_{i}}{v_{i}} \}_{i=1}^{n}.  
\end{equation}

Using this rank-one projective measurment for MED, the average probability of success is given by:

\begin{equation}
\label{PSU}
P_s = \sum_{\substack{i=1}}^{n} | \braket{ \widetilde{ \psi_i}}{v_i}|^{2} = \sum_{\substack{i=1}}^{n} \left| \left(G^\frac{1}{2} U\right)_{ii}\right|^2.
\end{equation}

Let $\{\ketbra{w_i}{w_i} \}_{i=1}^{n}$ be a rank-one projective measurement, which is also a solution for the $n$-POVM $\{ E_i \}_{i=1}^{n}$ in condition \eqref{St}. Here $\braket{w_i}{w_j}=\delta_{ij}$ for $ i,j = 1, 2, \cdots, n$. Let an $n \times n$ unitary matrix $W$ be related to the rank-one projective measurement $\{\ketbra{w_i}{w_i} \}_{i=1}^{n}$ in the following way.

\begin{equation}
\label{w}
\ket{w_i} = \sum_{\substack{j=1}}^{n} \left(G^{\frac{1}{2}} W\right)_{ji} \tket{u}{j}.
\end{equation}

The unitary matrix $W$ is fixed upto right-multiplication with a diagonal unitary matrix, which changes the phases of $\ket{w_i}$. We will soon fix the phases of $\ket{w_i}$ which will ensure that $W$ will be \emph{unique}.

Thus equation \eqref{St} can be rewritten as:

\begin{equation} 
\label{St1}
\bra{w_j} \left( \ketbrat{\psi}{j}{\psi}{j} - \ketbrat{\psi}{i}{\psi}{i} \right) \ket{w_i} = 0, \; \forall \; 1 \leq i,j \leq n.
\end{equation}

Using equation \eqref{w} in equation \eqref{St1}:

\begin{equation}
\label{StG}
 \left(G^{\frac{1}{2}} W\right)_{jj}^{*} \left(G^{\frac{1}{2}} W\right)_{ji} =  \left(G^{\frac{1}{2}} W\right)_{ij}^{*} \left(G^{\frac{1}{2}} W\right)_{ii}, \; \forall \; 1 \leq i,j \leq n.
\end{equation}

The diagonal elements of the matrix $G^\frac{1}{2}W$ can be made non-negative by appropriately fixing the phases of the $\ket{w_i}$ vectors in the following way: right-multiply $W$ with a diagonal unitary $W'$, whose diagonal elements will be phases. From equation \eqref{w} it is seen that right-multiplying $W$ with $W'$ merely changes the phases of the ONB vectors $\ket{w_i}$, and that they will still satisfy equation \eqref{St1}. We can vary the phases in $W'$ so that the diagonals of $G^\frac{1}{2} W W'$ are non-negative. We absorb $W'$ into $W$. After this absorption, the $n \times n$ unitary $W$ which is associated with the rank-one projective measurement $\{ \ketbra{w_i}{w_i} \}_{i=1}^{n}$, is unique. Continuing, we see that equations \eqref{StG} now take the following form.

\begin{equation}
\label{StG1}
 \left(G^{\frac{1}{2}} W\right)_{jj} \left(G^{\frac{1}{2}} W\right)_{ji} = 
 \left(G^{\frac{1}{2}} W\right)_{ii} \left(G^{\frac{1}{2}} W\right)_{ij}^{*}, \; \forall \; 1 \leq i,j \leq n.
\end{equation}

Let $D= Diag(d_{11},d_{22}, \cdots, d_{nn})$ be the real diagonal matrix of $G^\frac{1}{2} W $, i.e., 
 
\begin{equation}
\label{D}
d_{ii} =  \left(G^{\frac{1}{2}} W \right)_{ii}, \; \forall \; 1 \leq i \leq n.
\end{equation}

From equation \eqref{StG1} and the fact that the diagonals of $G^\frac{1}{2}W$ are all real, we infer that the matrix $DG^{\frac{1}{2}}W$ is hermitian.

\begin{equation}
\label{DGWhermit}
D G^{\frac{1}{2}} W - W^\dag G^{\frac{1}{2}} D = 0. 
\end{equation}

Left multiplying the LHS and RHS by $D G^{\frac{1}{2}} W$ gives

\begin{align}
\label{EOY} 
& \left(D G^{\frac{1}{2}} W \right)^2 - D G D = 0 \notag \\
\Longrightarrow \; \; & X^2 - DGD  = 0,
\end{align}

where $X \equiv DG^\frac{1}{2}W$, $X^\dag = X$ and (note that) $D^2$ is the diagonal of $X$. 

In the MED problem, we are given the gram matrix $G$ of the ensemble $\widetilde{P}$. To solve condition \eqref{St} for MED of $\widetilde{P}$ we need to solve for $X$ in equation \eqref{EOY}. Knowing $X$ tells us what $G^\frac{1}{2}W$ is, which can be used in equation \eqref{w} to obtain $\{\ketbra{w_i}{w_i}\}_{i=1}^{n}$. Equation \eqref{EOY} came from assuming that $\{ \ketbra{w_i}{w_i} \}_{i=1}^{n}$ represented some $n$-POVM which satisfied condition \eqref{St}. For $\{ \ketbra{w_i}{w_i} \}_{i=1}^{n}$ to be the optimal POVM it needs to satisfy condition \eqref{Glb} too; this will impose another condition on the solution for $X$ in equation \eqref{EOY}. 

\begin{theorem}
\label{thmret}
Let $\mathcal{X}$ be a solution for $X$ in equation \eqref{EOY}. Then $\mathcal{X}$ corresponds to the optimal POVM for MED of $\widetilde{P}$ if it is positive definite. Also, $\mathscr{R}_\mathcal{G}(G) = \dfrac{1}{Tr(D^2G)} DGD$, where $D$ is the square root of the diagonal of $\mathcal{X}$.
\end{theorem}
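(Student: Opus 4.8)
The plan is to prove the two assertions in turn, throughout identifying $\mathcal{X}$ with the rank-one projective measurement $\{\ketbra{w_i}{w_i}\}_{i=1}^n$ from which equation \eqref{EOY} was derived, and working with the operator $Z$ of \eqref{Z}.

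\emph{First assertion.} Because $\mathcal{X}$ solves \eqref{EOY}, it comes from a measurement $\{\ketbra{w_i}{w_i}\}_{i=1}^n$ obeying the complementary slackness \eqref{cslack} (equivalently \eqref{St}), so optimality only requires the dual feasibility \eqref{Glb}. The first step is to compute the matrix of $Z$ in the orthonormal basis $\{\ket{w_i}\}$. From \eqref{w}, \eqref{orth} and \eqref{D} one has $\braket{\widetilde{\psi}_i}{w_i}=\left(G^{\frac12}W\right)_{ii}=d_{ii}$, and \eqref{cslack} applied to $\ket{w_i}$ gives $Z\ket{w_i}=\ketbrat{\psi}{i}{\psi}{i}\ket{w_i}=d_{ii}\tket{\psi}{i}$. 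Taking matrix elements and using $\braket{w_k}{\widetilde{\psi}_l}=\overline{\left(G^{\frac12}W\right)_{lk}}$, I expect $\bra{w_k}Z\ket{w_l}=\left(W^\dagger G^{\frac12}D\right)_{kl}=\left(DG^{\frac12}W\right)^{\dagger}_{kl}=\mathcal{X}_{kl}$, i.e. the matrix of $Z$ in $\{\ket{w_i}\}$ is precisely $\mathcal{X}$, whence $Z>0$ iff $\mathcal{X}>0$. Assuming $\mathcal{X}>0$, so that $Z$ is invertible, I would reduce \eqref{Glb} (which reads $Z\succeq\ketbrat{\psi}{i}{\psi}{i}$) to the scalar inequalities $\brat{\psi}{i}Z^{-1}\tket{\psi}{i}\le 1$ by conjugating with $Z^{-\frac12}$. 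Finally $Z\ket{w_i}=d_{ii}\tket{\psi}{i}$ gives $Z^{-1}\tket{\psi}{i}=d_{ii}^{-1}\ket{w_i}$, so $\brat{\psi}{i}Z^{-1}\tket{\psi}{i}=d_{ii}^{-1}\braket{\widetilde{\psi}_i}{w_i}=1$; thus \eqref{Glb} holds (with equality) and $\{\ketbra{w_i}{w_i}\}$ is optimal.

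\emph{Second assertion.} By \eqref{mathRR} and the bijectivity of $\mathscr{R}$, it suffices to produce the ensemble $\widetilde{Q}$ whose PGM equals the optimal measurement $\{\ketbra{w_i}{w_i}\}$ and to read off its gram matrix. I would take the candidate $q_i\propto d_{ii}^2 p_i$, whose gram matrix is $\frac{1}{Tr(D^2G)}DGD$, and verify its PGM. Since $\mathcal{X}>0$ and $\mathcal{X}^2=DGD$ by \eqref{EOY}, $\mathcal{X}$ is the unique positive square root $(DGD)^{\frac12}$; moreover the diagonal of $\mathcal{X}$ is $D^2$, so every $d_{ii}>0$ and $D$ is invertible. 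Setting $M:=G^{\frac12}D$ gives $M^\dagger M=DGD$ and $MM^\dagger=G^{\frac12}D^2G^{\frac12}$, and combining $\mathcal{X}=DG^{\frac12}W=(M^\dagger M)^{\frac12}$ with $DG^{\frac12}=M^\dagger$ yields $W=(M^\dagger)^{-1}(M^\dagger M)^{\frac12}$, which is exactly the unitary polar factor of $M$. On the other side, let $A$ be the matrix whose columns are the $\tket{\psi}{i}$, so that $A^\dagger A=G$ and, in the polar form $A=\mathcal{V}G^{\frac12}$, the vectors $\ket{w_i}$ of \eqref{w} are the columns of $\mathcal{V}W$. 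The candidate ensemble has $\rho_q\propto AD^2A^\dagger=(AD)(AD)^\dagger$, so its PGM vectors are the columns of the polar factor of $B:=AD$, which a short computation identifies with $\mathcal{V}W$. Hence the candidate's PGM is $\{\ketbra{w_i}{w_i}\}$, so by \eqref{mathRR} it equals $\mathscr{R}(\widetilde{P})$ and $\mathscr{R}_\mathcal{G}(G)=\frac{1}{Tr(D^2G)}DGD$. As a cross-check, one may instead verify that this gram matrix reproduces the given $p_i$ through relation \eqref{piqi}.

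The main obstacle I anticipate is the consistent bookkeeping of the several polar decompositions: one must confirm that the unitary $W$ read off from the positive square root of $\mathcal{X}$ coincides with the polar factor of $M=G^{\frac12}D$, and that the optimal vectors $\ket{w_i}$ and the PGM vectors of the candidate ensemble both emerge as the columns of the single unitary $\mathcal{V}W$. The phase ambiguity (PGM projectors versus the phase-fixed $W$ of \eqref{StG1}--\eqref{D}) must be tracked but is harmless, since only the rank-one projectors matter. The facts that make the argument go through are $\mathcal{X}=(DGD)^{\frac12}$ and the invertibility of $D$, both of which rest on the positivity hypothesis $\mathcal{X}>0$ (through $\mathcal{X}_{ii}=d_{ii}^2$).
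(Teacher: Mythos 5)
Your proposal is essentially correct, but it proves the statement by a genuinely different route than the paper, and in fact it proves the \emph{literal} reading of the theorem (positive definite $\Rightarrow$ optimal) --- which is the direction the paper actually defers to Theorem \ref{thm2}. The paper's own proof of Theorem \ref{thmret} runs in the opposite direction: it \emph{imports} the Belavkin--Mochon result that the optimal POVM is the PGM of $\mathscr{R}(\widetilde{P})$, writes $\ket{w_i} = \rho_q^{-1/2}\sqrt{q_i}\ket{\psi_i}$, deduces $\left(G^{\frac12}W\right)_{ji} = \frac{\sqrt{p_j}}{\sqrt{q_j}}\left(G_q^{\frac12}\right)_{ji}$, and uses \eqref{piqi} to get $\mathcal{X} = C\,G_q^{\frac12} > 0$ and $DGD = C^2 G_q$; that argument establishes \emph{existence} of a positive-definite solution of \eqref{EOY} (so the method is guaranteed nonvacuous), which your proof, taking positivity as a hypothesis, never addresses. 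Conversely, your first assertion is proved by a direct KKT verification that appears nowhere in the paper and is arguably cleaner and more self-contained: you show the matrix of $Z$ in the basis $\{\ket{w_i}\}$ is exactly $\mathcal{X}$ (correct: $\bra{w_k}Z\ket{w_l} = d_{ll}\braket{w_k}{\widetilde{\psi}_l} = \left(W^\dagger G^{\frac12}D\right)_{kl} = \mathcal{X}_{kl}$), then reduce \eqref{Glb} to $\brat{\psi}{i}Z^{-1}\tket{\psi}{i} \le 1$ and compute it to be exactly $1$; the paper's Theorem \ref{thm2} instead squares $\sum_i \ketbra{w_i}{w_i}\cdot\ketbrat{\psi}{i}{\psi}{i}$ and matches the result against Mochon's formula $Z = C\rho_q^{\frac12}$, so your route avoids leaning on that external identity altogether. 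Your second assertion (candidate $q_i \propto d_{ii}^2 p_i$, gram matrix $\frac{1}{Tr(D^2G)}DGD$) reproduces the substance of the paper's Theorem \ref{thm2} computation, though via polar decompositions rather than via \eqref{piqi}. One caution there: concluding $\widetilde{Q} = \mathscr{R}(\widetilde{P})$ from $PGM(\widetilde{Q}) = \mathscr{P}(\widetilde{P})$ and \eqref{mathRR} tacitly requires that no \emph{other} ensemble has the same PGM, i.e.\ injectivity of the PGM map, which neither you nor the paper proves; your ``cross-check'' via \eqref{piqi} together with the bijectivity of $\mathscr{R}^{-1}$ is not merely a cross-check but the rigorous way to close this step, and it is precisely how the paper closes it. Finally, note that your opening claim that a solution of \eqref{EOY} ``comes from a measurement obeying \eqref{cslack}'' needs the positivity hypothesis already at that point (to invert $D$ and $\mathcal{X}$ and verify that $W = G^{-\frac12}D^{-1}\mathcal{X}$ is unitary, whence \eqref{St} and then \eqref{cslack} by summation); you flag this dependence at the end, and the paper glosses the same reconstruction equally briefly in Theorem \ref{thm2}, so this is a presentational rather than a substantive gap.
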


\begin{proof}
We relate $d_{ii}$, defined in equation \eqref{D}, to the probability $q_i$ mentioned in equation \eqref{piqi}. In section \eqref{intro} it was mentioned that the optimal POVM for MED of $\widetilde{P}$ is the PGM of an ensemble $\mathscr{R}\left(\widetilde{P}\right)=\widetilde{Q} = \{q_i, \ketbra{\psi_i}{\psi_i} \}_{i=1}^{n}$ (see definition \eqref{mathRdef}). This means that \cite{Bela}

\begin{equation}
\label{pretty}
\ket{w_i} = \left( \sum_{j=1}^{n} \ketbrat{\psi'}{j}{\psi'}{j} \right)^{-1/2} \tket{\psi'}{i}, \; \forall \; 1 \leq i \leq n,
\end{equation}

where $\tket{\psi'}{i}\equiv \sqrt{q_i}\ket{\psi_i}$ and $\left( \sum_{j=1}^{n} \ketbrat{\psi'}{j}{\psi'}{j} \right)^{-1/2}  >0$. Define $\tket{u'}{i}$ to be such that  $\tbraket{\psi'}{i}{u'}{j}= \delta_{ij}, \; \forall \; 1 \leq i,j \leq n$. $G_q$ is the gram matrix corresponding to the ensemble $\widetilde{Q}$. It can be verified that $G_q^{-1}$ is the gram matrix of the vectors $\{ \tket{u'}{i} \}_{i=1}^{n}$, i.e., $\left(G_q^{-1}\right)_{ij} = \tbraket{u'}{i}{u'}{j}, \; \forall \; 1 \leq i,j \leq n$. This implies that 

\begin{equation}
\label{rho'}
\left( \sum_{j=1}^{n} \ketbrat{\psi'}{j}{\psi'}{j} \right)^{-1/2} = \left( \sum_{j=1}^{n} \ketbrat{u'}{j}{u'}{j} \right)^{1/2} = \sum_{i,j=1}^{n} \left( G_q^\frac{1}{2} \right)_{ij} \ketbrat{u'}{i}{u'}{j}.
\end{equation}

Note that since the LHS in equation \eqref{rho'} is positive definite, the RHS in equation \eqref{rho'} should also be positive definite and that can only be true if $G_q^\frac{1}{2} > 0$. One can verify the above equation by squaring on both sides \footnote{That $\tket{\psi'}{i}$ and $\tket{u'}{j}$ are related by $\tbraket{\psi'}{i}{u'}{j}= \delta_{ij}$ implies that $\sum_{j=1}^{n} \ketbrat{u'}{j}{\psi'}{j} = \mathbb{1}_n$. This can be seen from the fact that if $\ket{\eta} = \sum_{j=1}^{n} \alpha_j \tket{u'}{j}$ is any vector in $\mathcal{H}$, then $\left(\sum_{j=1}^{n} \ketbrat{u'}{j}{\psi'}{j}\right) \ket{\eta} = \ket{\eta}$. That $\sum_{j=1}^{n} \ketbrat{u'}{j}{\psi'}{j} = \mathbb{1}_n$ is true implies that $\left(\sum_{j=1}^{n} \ketbrat{u'}{j}{u'}{j}\right)\left(\sum_{k=1}^{n} \ketbrat{\psi'}{k}{\psi'}{k}\right) = \mathbb{1}_n$. Hence  $\sum_{j=1}^{n} \ketbrat{u'}{j}{u'}{j}$ is the inverse of $\sum_{k=1}^{n} \ketbrat{\psi'}{k}{\psi'}{k}$. Also, since $G_q$ is the gram matrix of $\left\{ \tket{\psi'}{j} 
\right\}_{j=1}^n$ and $G_q^{-1}$ is the gram matrix of $\left\{ \tket{u'}{j} \right\}_{j=1}^{n}$ we get that $\left(  \sum_{i,j=1}^{n} \left( G_q^\frac{1}{2} \right)_{ij} \ketbrat{u'}{i}{u'}{j} \right)^2 =  \sum_{j=1}^{n} \ketbrat{u'}{j}{u'}{j}$.}. Substituting the above in equation \eqref{pretty} gives 

\begin{eqnarray}
\label{wu'1}
& \ket{w_i} & = \sum_{j=1}^{n} \left(G_q^\frac{1}{2} \right)_{ji} \tket{u'}{j} \notag \\
 & ~ & = \sum_{j=1}^{n}  \frac{\sqrt{p_j}}{\sqrt{q_j}} \left(  G_q^\frac{1}{2} \right)_{ji}  \tket{u}{j}, \; \forall \; 1 \leq i \leq n,
\end{eqnarray}

where $ \tket{u'}{i} = \frac{\sqrt{p_i}}{\sqrt{q_i}} \tket{u}{i}$, $\forall \; 1 \leq i \leq n$ (since $ \tket{\psi'}{i} = \frac{\sqrt{q_i}}{\sqrt{p_i}} \tket{\psi}{i} $). Since $\{ \tket{u}{i} \}_{i=1}^{n}$ is a basis for $\mathcal{H}$, on comparing equations \eqref{wu'1} and \eqref{w} we obtain

\begin{subequations}
\begin{equation}
\label{atd}
 \left( G^{\frac{1}{2}} W \right)_{ji}  =  \dfrac{\sqrt{p_j}}{\sqrt{q_j}} \left(  G_q^\frac{1}{2} \right)_{ji}, \; \forall \; 1 \leq i,j, \leq n, 
\end{equation}
\begin{equation}
 \label{aj}
\Longrightarrow  \; \; d_{jj}  =  \dfrac{\sqrt{p_j}}{\sqrt{q_j}} \left(  G_q^\frac{1}{2} \right)_{jj}, \; \forall \; 1 \leq j \leq n.
\end{equation}
\end{subequations}

Using equation \eqref{piqi} we get that 

\begin{equation}
\label{aipiqi}
d_{jj}\dfrac{\sqrt{p_j}}{\sqrt{q_j}} = \frac{p_j}{q_j} \left(  G_q^\frac{1}{2} \right)_{jj} = C, \; \forall \; 1 \leq j \leq n,
\end{equation}

where $C$ is the positive constant that appears in equation \eqref{piqi}. Since $d_{jj} \frac{\sqrt{p_j}}{\sqrt{q_j}} = C, \; \forall \;  1 \leq j \leq n$, using equations \eqref{atd} and \eqref{aipiqi}  we get that $$(\mathcal{X})_{ji} = \left(DG^\frac{1}{2}W\right)_{ji} = d_{jj}\left(G^\frac{1}{2}W \right)_{ji}= C \times \left(G_q^\frac{1}{2} \right)_{ji},  \; \forall \; 1 \leq i,j, \leq n,$$ that is, $\mathcal{X}$ is equal to the product of a positive constant $C$ and $G_q^\frac{1}{2}$, which implies that  $\mathcal{X} > 0$. Also from equation \eqref{EOY} it follows that $DGD =C^2 \times G_q$, i.e., the gram matrix of $\mathscr{R}\left(\widetilde{P}\right)=\widetilde{Q}$ is given by $\frac{DGD}{Tr(D^2G)}$, i.e., 
\begin{equation}
\label{GXR}
\mathscr{R}_\mathcal{G}(G) = \frac{\mathcal{X}^2}{Tr\left(\mathcal{X}^2\right)} =  \frac{DGD}{Tr(D^2G)}.
\end{equation}
\end{proof}

The converse of theorem \ref{thmret} is proved in the following.

\begin{theorem}
\label{thm2}
If $\mathcal{X}$ is a solution for $X$ in equation \eqref{EOY} and $\mathcal{X}$ is positive definite, then $\mathcal{X}$ corresponds to the optimal POVM for MED of the ensemble $\widetilde{P}$. Also, $\mathcal{X}$ is unique, i.e., there is no other $\mathcal{X}'$ such that it is a solution for $X$ in equation \eqref{EOY} and $\mathcal{X}'>0$.
\end{theorem}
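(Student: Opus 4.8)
The plan is to verify the two optimality conditions of Section~\ref{MEDP} for the rank-one projective measurement $\{ \ketbra{w_i}{w_i} \}_{i=1}^{n}$ reconstructed from $\mathcal{X}$ through equations \eqref{w} and \eqref{D}. Condition \eqref{St} holds automatically, because equation \eqref{EOY} was obtained from, and is equivalent to, \eqref{St} (given the phase convention that makes the diagonal of $G^\frac{1}{2}W$ real); for a positive definite $\mathcal{X}$ one recovers a genuine unitary $W = G^{-\frac12}D^{-1}\mathcal{X}$, so the measurement is well defined. The entire content of the theorem therefore lies in establishing condition \eqref{Glb} and then extracting uniqueness. First I would exploit positive-definiteness: since $\mathcal{X}$ is Hermitian and positive definite, it is \emph{the} positive square root of $\mathcal{X}^2 = DGD$, so $\mathcal{X} = C\,G_q^{\frac12}$ with $C \equiv \sqrt{Tr(\mathcal{X}^2)}=\sqrt{Tr(D^2G)}>0$ and $G_q \equiv DGD/Tr(D^2G)=\mathcal{X}^2/Tr(\mathcal{X}^2)$. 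Then $G_q$ is positive definite of unit trace; a short computation shows its normalized overlap structure coincides with that of $G$, so $q_i \equiv (G_q)_{ii}$ defines a genuine ensemble $\widetilde{Q}=\{q_i, \ketbra{\psi_i}{\psi_i}\}_{i=1}^{n}\in\mathcal{E}$ sharing the states of $\widetilde{P}$.

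The heart of the argument is to identify $\{\ketbra{w_i}{w_i}\}_{i=1}^{n}$ with $PGM(\widetilde{Q})$ and then evaluate $Z$. Feeding $\mathcal{X}=C G_q^{\frac12}$ into \eqref{D} gives $d_{ii}^2 = C (G_q^{\frac12})_{ii}$, and substituting this into the defining relation \eqref{piqi} verifies that $p_i = C q_i /(G_q^{\frac12})_{ii}$, i.e.\ $\mathscr{R}_\mathcal{G}^{-1}(G_q)=G$; reversing the computation of Theorem~\ref{thmret} then yields $\ket{w_i} = \rho_q^{-1/2}\sqrt{q_i}\,\ket{\psi_i}$ with $\rho_q=\sum_j q_j\ketbra{\psi_j}{\psi_j}>0$, so the measurement is exactly $PGM(\widetilde{Q})$. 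Inserting this into \eqref{Z} and using the constant relation $d_{jj}\sqrt{p_j/q_j}=C$ (equation \eqref{aipiqi}, which here follows from the verified \eqref{piqi}) collapses the sum to
\begin{equation*}
Z = \sum_{i=1}^{n} \ketbrat{\psi}{i}{\psi}{i}\,\ketbra{w_i}{w_i} = C\,\rho_q^{1/2},
\end{equation*}
which is manifestly Hermitian and positive definite. This is the step where positive-definiteness is indispensable: had $\mathcal{X}$ negative eigenvalues, $\mathcal{X}\neq C G_q^{\frac12}$, the measurement would fail to be the PGM of the positive ensemble $\widetilde{Q}$, and $Z$ would not be forced to be positive.

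To finish the optimality part I would check \eqref{Glb}, $Z \geq p_i\ketbra{\psi_i}{\psi_i}$. Conjugating $C\rho_q^{1/2}-p_i\ketbra{\psi_i}{\psi_i}$ by the positive operator $\rho_q^{-1/4}$ converts the inequality into $C\,\mathbb{1}_n \geq p_i\,\rho_q^{-1/4}\ketbra{\psi_i}{\psi_i}\rho_q^{-1/4}$, whose right-hand side is rank one with single nonzero eigenvalue $p_i\bra{\psi_i}\rho_q^{-1/2}\ket{\psi_i}$; relation \eqref{aipiqi} gives $p_i\bra{\psi_i}\rho_q^{-1/2}\ket{\psi_i}=C$, so the inequality holds, saturated in exactly the direction demanded by complementary slackness \eqref{cslack}. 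Hence $\{\ketbra{w_i}{w_i}\}_{i=1}^{n}$ satisfies both \eqref{St} and \eqref{Glb} and is the optimal POVM for $\widetilde{P}$.

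For uniqueness I would show that any positive-definite solution $\mathcal{X}'$ produces, by the same construction, a gram matrix $G_q'=(\mathcal{X}')^2/Tr\big((\mathcal{X}')^2\big)$ with $\mathscr{R}_\mathcal{G}^{-1}(G_q')=G$. Since $\mathscr{R}_\mathcal{G}^{-1}$ is a bijection on $\mathcal{G}$ (Section~\ref{intro}), $G_q'=G_q$, so $(\mathcal{X}')^2$ and $\mathcal{X}^2$ agree up to a positive scalar, i.e.\ $\mathcal{X}'=\sqrt{\mu}\,\mathcal{X}$ for some $\mu>0$. But \eqref{EOY} is not scale invariant — replacing $X$ by $\lambda X$ sends $DGD \to \lambda DGD$ while $X^2 \to \lambda^2 X^2$, so a scaled solution solves \eqref{EOY} only when $\lambda=1$ — which forces $\sqrt{\mu}=1$ and hence $\mathcal{X}'=\mathcal{X}$. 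The main obstacle I anticipate is the positivity step for \eqref{Glb}: the computation itself is routine, but the conceptual crux is recognizing that positive-definiteness of $\mathcal{X}$ is precisely the hypothesis promoting the formal solution of \eqref{St} to the genuine PGM of a positive ensemble $\widetilde{Q}=\mathscr{R}(\widetilde{P})$, thereby making $Z=C\rho_q^{1/2}$ positive.
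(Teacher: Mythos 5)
Your proposal is correct, and on the decisive sufficiency step it takes a genuinely different route from the paper. The paper never checks condition \eqref{Glb} directly: it imports Mochon's equation (33) from \cite{Carlos} --- the fact that the optimal dual operator for $\widetilde{P}$ is $Z = C\rho_q^{1/2}$ with $\widetilde{Q}=\mathscr{R}(\widetilde{P})$ --- then builds the auxiliary ensemble $\widetilde{K}$ with $k_i = d_{ii}^2 p_i/Tr(D^2G)$, identifies $\widetilde{K}=\widetilde{Q}$ via the bijectivity of $\mathscr{R}^{-1}$, and concludes by showing that the candidate measurement's operator $\sum_i \ketbra{w_i}{w_i} \widetilde{\psi}_i \rangle \langle \widetilde{\psi}_i|$ is positive and squares to $Z^2$, hence equals the known optimal $Z$. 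You instead verify dual feasibility from scratch: writing $A$ for the matrix whose $i$-th column is $\sqrt{q_i}\ket{\psi_i}$, the identity $A^\dag f(AA^\dag)A = G_q f(G_q)$ gives $\sqrt{q_iq_j}\pk{\psi_i}{\rho_q^{-1/2}}{\psi_j} = \left(G_q^{1/2}\right)_{ij}$, which together with your directly verified relation \eqref{piqi} yields $p_i\pk{\psi_i}{\rho_q^{-1/2}}{\psi_i}=C$, so conjugation by $\rho_q^{-1/4}$ reduces $Z \geq p_i\ketbra{\psi_i}{\psi_i}$ to $C\,\mathbb{1}_n \geq (\text{rank-one operator with eigenvalue } C)$. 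This makes the optimality half self-contained modulo only the Yuen et al.\ conditions \eqref{St} and \eqref{Glb} of Section \ref{MEDP}, where the paper's proof is anchored on Mochon's characterization of $Z$. Your uniqueness argument also differs: the paper deduces uniqueness of the tuple $(d_{11},\dots,d_{nn})$ from the cited uniqueness of the optimal POVM and then takes the positive square root of $DGD$; you use injectivity of $\mathscr{R}_\mathcal{G}^{-1}$ to force $(\mathcal{X}')^2 \propto \mathcal{X}^2$ and then kill the scalar by the correct observation that under $X \to \lambda X$ the left side of \eqref{EOY} scales as $\lambda^2$ while $DGD$ scales only as $\lambda$ --- a clean algebraic alternative, though it leans on \cite{Carlos} for bijectivity, the same external anchor the paper uses for its $k_i = q_i$ identification, so neither uniqueness argument is strictly more elementary than the other. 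Two assertions you should make explicit to seal the write-up: first, that $W = G^{-1/2}D^{-1}\mathcal{X}$ is genuinely unitary, which follows from \eqref{EOY} via $G^{-1} = D\mathcal{X}^{-2}D$ (so $W^\dag W = \mathcal{X}D^{-1}G^{-1}D^{-1}\mathcal{X} = \mathbb{1}_n$, with $D$ invertible because the diagonal of a positive definite $\mathcal{X}$ is strictly positive); second, the overlap identity above, since it is the entire engine of your rank-one eigenvalue computation.
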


\begin{proof}
Let $\mathcal{X}$ be a solution for $X$ in equation \eqref{EOY} and let $\mathcal{X}$ be positive definite. Equating $D^{-1} \mathcal{X}$ with $G^\frac{1}{2}W$ (see below equation \eqref{EOY}) and employing it in equation \eqref{w}, we obtain $\{ \ketbra{w_i}{w_i} \}_{i=1}^{n}$ to be the rank-one projective measurement corresponding to solution $\mathcal{X}$. We want to prove that $\{ \ketbra{w_i}{w_i} \}_{i=1}^{n}$ is the optimal POVM. For this purpose we borrow a result from Mochon's paper. Equation (33) in Mochon's paper \cite{Carlos} tells us that the positive operator $Z$, defined in equation \eqref{Z}, is a scalar times the positive square root of the density matrix of the ensemble $\mathscr{R}\left(\widetilde{P}\right)$, i.e.,
\begin{equation}
\label{ZQ}
Z = C \left( \sum_{i=1}^{n} q_i \ketbra{\psi_i}{\psi_i} \right)^\frac{1}{2}.
\end{equation}
We will compute $\sum_{i=1}^{n} p_i |w_i\rangle\langle w_i | \psi_i \rangle\langle \psi_i | $ and show that it is equal to $C ( \sum_{i=1}^{n} q_i \ketbra{\psi_i}{\psi_i} )^\frac{1}{2}$, thereby proving that $\sum_{i=1}^{n} p_i |w_i\rangle\langle w_i | \psi_i \rangle\langle \psi_i | $ is equal to $Z$. This will then imply that $\{ \ketbra{w_i}{w_i} \}_{i=1}^{n}$ is the optimal POVM.

Since $\ket{w_i} = \sum_{k=1}^{n} (D^{-1}\mathcal{X})_{ki} \tket{u}{k}$ and $\tket{u}{k} = \sum_{j=1}^{n} (G^{-1})_{jk} \tket{\psi}{j}$, using equation \eqref{EOY} it's easily verified that $\ket{w_i} = \sum_{j=1}^{n} \left( D\mathcal{X}^{-1} \right)_{ji} \tket{\psi}{j}$. Then

\begin{equation}
\label{Zw}
\sum_{i=1}^{n} |w_i\rangle\langle w_i | \widetilde{\psi}_i \rangle\langle \widetilde{\psi}_i |  =   \sum_{j,i =1}^{n} \left( D \mathcal{X}^{-1} D \right)_{ji} \ketbrat{\psi}{j}{\psi}{i} > 0.
\end{equation}

Squaring the RHS in equation \eqref{Zw} and employing equation \eqref{EOY} we get that 

\begin{equation}
\label{Z2}
\left(\sum_{i,j =1}^{n} \left( D \mathcal{X}^{-1} D \right)_{ij} \ketbrat{\psi}{i}{\psi}{j}\right)^2 = \sum_{i,j =1}^{n} d_{ii}^2 \ketbrat{\psi}{i}{\psi}{i}.
\end{equation}

Consider the probability $ k_i \equiv \dfrac{d_{ii}^2p_i}{\sum_{j=1}^{n}d_{jj}^2p_j}, \; \forall \; 1 \leq i \leq n$. Thus $\sum_{i=1}^n k_i \ketbra{\psi_i}{\psi_i}$ is the average state of the ensemble $\widetilde{K} = \{ k_i, \ketbra{\psi_i}{\psi_i} \}_{i=1}^{n}$. The matrix elements of the gram matrix, $G_k$ of $\widetilde{K}$ are then given by $$(G_k)_{ij} = \sqrt{k_i k_j} \braket{\psi_i}{\psi_j} = \dfrac{1}{\sum_{l=1}^{n}d_{ll}^2p_l}d_{ii} \tbraket{\psi}{i}{\psi}{j} d_{jj}.$$ This tells us that $G_k=\frac{1}{Tr(D^2G)}DGD$; using equation \eqref{EOY} we get that the positive square root of $G_k$ is $G_k^\frac{1}{2}=\frac{1}{\sqrt{Tr(D^2G)}}\mathcal{X}$ and, hence, $d_{ii}^2 =\mathcal{X}_{ii}= \sqrt{Tr(D^2G)}\left(G_k^\frac{1}{2}\right)_{ii}$ (see equation \eqref{D}). Thus $k_i$ and $p_i$ are related by the equation 

\begin{equation}
\label{aiki}
p_i = C' \dfrac{k_i}{ \left( G_k^\frac{1}{2} \right)_{ii}}, \; \forall \; 1 \leq i \leq n,
\end{equation}

where $C'$ is the normalization constant given by $C'= \sqrt{Tr(D^2G)}$. We see that $p_i$ are related to the $k_i$ in the exact same way that $p_i$ are related to $q_i$ from equation \eqref{piqi}. Below definition \eqref{mathRR}, it was mentioned that if $\widetilde{P}$ and $\widetilde{K}$ are two ensembles with the same states with apriori probabilities $p_i$ and $k_i$, which are related by equation \eqref{aiki}, we get that $\mathscr{R}^{-1}\left(\widetilde{K}\right) = \widetilde{P}$. Since $\mathscr{R}^{-1}$ is a bijection, this implies that $\widetilde{K} = \mathscr{R}\left(\widetilde{P}\right)= \widetilde{Q}$ and $k_i = q_i, \; \forall \; 1 \leq i \leq n$, where $q_i$ is the apriori probability of states in $\widetilde{Q}$ as given in equation \eqref{piqi}. This also implies that $C' = C$. 

From equation \eqref{ZQ} we get that the RHS of equation \eqref{Z2} equates to 
$$   \left( \sum_{i=1}^{n} |w_i\rangle\langle w_i | \widetilde{\psi}_i \rangle\langle \widetilde{\psi}_i |\right)^2 =  \sum_{i=1}^n d_{ii}^2 \ketbrat{\psi}{i}{\psi}{i} = C^2\left( \sum_{i=1}^{n} q_i\ketbra{\psi_i}{\psi_i} \right) = Z^2.$$
Then the fact that $\sum_{i=1}^{n} |w_i\rangle\langle w_i | \widetilde{\psi}_i \rangle\langle \widetilde{\psi}_i | $ is positive definite tells us that
\begin{equation}
\label{Zw1}
\sum_{i=1}^{n} |w_i\rangle\langle w_i | \widetilde{\psi}_i \rangle\langle \widetilde{\psi}_i | = C \left( \sum_{i=1}^{n} q_i \ketbra{\psi_i}{\psi_i} \right)^\frac{1}{2} = Z .
\end{equation} 

Note that the ONB $\{ \ket{w_i}\}_{i=1}^{n}$ was constructed from $\mathcal{X}$, which solves for $X$ in equation \eqref{EOY} and which is positive definite. That $\sum_{i=1}^{n} |w_i\rangle\langle w_i | \widetilde{\psi}_i \rangle\langle \widetilde{\psi}_i |$ is equal to $C \left( \sum_{i=1}^{n} q_i \ketbra{\psi_i}{\psi_i} \right)^\frac{1}{2} $, which \emph{we already know} is equal to $Z$ \cite{Carlos}, implies that $\{ \ketbra{w_i}{w_i} \}_{i=1}^{n}$ is the optimal POVM.  

Thus $\{ \ketbra{w_i}{w_i} \}_{i=1}^{n}$ is the optimal POVM. 

Since $\{ \ketbra{w_i}{w_i} \}_{i=1}^{n}$ is the \emph{unique} optimal POVM for MED of $\widetilde{P}$ so is the $n$ tuple $(d_{11},d_{22},\cdots,d_{nn})$ unique to the MED of $\widetilde{P}$, \footnote{Note that $d_{ii} = \braket{w_i}{\widetilde{\psi}_i}$, thus if $\{\ketbra{w_i}{w_i}\}_{i=1}^{n}$ is unique, so must the $n$-tuple $(d_{11},d_{22},\cdots,d_{nn})$.}. This implies that $D=Diag(d_{11},d_{22},\cdots,d_{nn})$ is unique, which implies that $DGD$ is unique and since the positive square root of $DGD$ is also unique, that tells us that $\mathcal{X}$ is unique too. \end{proof}

Theorem \eqref{thm2} tells us that for MED of any $\widetilde{P} \in \mathcal{E}$, $\mathcal{X}$ is unique. But note that if $\ket{\psi_i}$ underwent a rotation by unitary $U$ then it can be inferred from equation \eqref{EOY} that the solution for $\mathcal{X}$ won't change since $G$ doesn't change. This implies that $\mathcal{X}$ is a function of $G$ in $\mathcal{G}$.

Let the matrix elements of $\mathcal{X}$ be given by the following equation

\begin{equation}
\label{mathcalW}
\mathcal{X} =  \begin{pmatrix}
                      {d_{11}}^2 & d_{12} + i d_{21} & d_{13} + i d_{31} & \cdots &  d_{1n}  + i d_{n1}  \\
  d_{12}  - i d_{21} & {d_{22}}^{2} & d_{23}  + i d_{32}  & \cdots &  d_{2n}  + i d_{n2}\\
d_{13} - i d_{31}  & d_{23}  - i d_{32}   & {d_{33}}^{2} & \cdots &  d_{3n}  + i d_{n3}  \\
\vdots & \vdots & \vdots & \ddots & \vdots \\
d_{1n}  - i d_{n1}  & d_{2n}  - id_{n2}   & d_{3n} - i d_{n3}  & \cdots &  {d_{nn}}^2 
                     \end{pmatrix},
\end{equation}

where $d_{kl}$ are the real and imaginary parts of the matrix elements of $\mathcal{X}$. Since $\mathcal{X}$ is a function on $\mathcal{G}$, $d_{kl}$ are also functions on $\mathcal{G}$. 

\begin{mydef}
\label{defQ}
Let $\mathcal{Q}$ denote the set of all positive definite $n \times n$ matrices. 
\end{mydef}
Thus $\mathcal{G} \subset \mathcal{Q}$. Using $\mathcal{G}$ and $\mathcal{Q}$ we formalize $\mathcal{X}$ as a function on $\mathcal{G}$. 

\begin{mydef}
\label{funmathD} 
\label{alphai} 
\label{funW} 
\label{defrhokl}
\begin{subequations}
$\mathcal{X}: \mathcal{G} \longrightarrow \mathcal{Q}$ is such that $\mathcal{X}(G)$ solves equation \eqref{EOY}
\begin{equation}
\label{calEOY}
\left(\mathcal{X}(G)\right)^2 - D(G) \, G  \,  D(G) = 0.
\end{equation}
Let's denote $d_{kl}: \mathcal{G} \longrightarrow \mathbb{R}$ to be the real and imaginary parts of matrix elements of $\mathcal{X}(G)$, i.e., 
\begin{align}
\label{rhodef}
 & ~~~~~~~~~~~~~~~~~~~~~~~~ d_{kl}\left(G\right) \equiv Re \left( \left(  \mathcal{X}\left(G\right) \right)_{kl} \right), \; \forall \; 1 \leq k < l \leq n, \\
 & ~~~~~~~~~~~~~~~~~~~~~~~~ d_{ii}\left(G\right) \equiv  \sqrt{\left(  \mathcal{X}\left(G\right) \right)_{ii}}, \; \forall \; 1 \leq i \leq n, \\
 & ~~~~~~~~~~~~~~~~~~~~~~~~ d_{kl}\left(G\right) \equiv Im \left( \left(  \mathcal{X}\left(G\right) \right)_{lk} \right), \; \forall \; 1 \leq l < k \leq n,
\end{align} and $D(G) \equiv Diag(d_{11}(G),d_{22}(G), \cdots, d_{nn}(G))$. 

\end{subequations}
\end{mydef}

Note that if one knows the real $n$-tuple $(d_{11}(G),d_{22}(G), \cdots, d_{nn}(G))$, then using equation \eqref{EOY} one can compute $\mathcal{X}(G)$. Thus we have reformulated the MED problem for linearly independent pure states in a rotationally invariant way: \bigskip

\textbf{Rotationally Invariant Necessary and Sufficient Conditions:} \emph{ Let $G$ be the gram matrix corresponding to an $n$-LIP: $\{ p_i, \ketbra{\psi_i}{\psi_i} \}_{i=1}^{n}$. To solve MED for this $n$-LIP, one needs to find real and positive $n$-tuple }$(d_{11}(G),d_{22}(G), \cdots, d_{nn}(G))$\emph{ such that, when arranged in the diagonal matrix }$ D(G) = Diag(d_{11}(G),d_{22}(G), \cdots, d_{nn}(G))$, \emph{the diagonal of the positive square root of }$ D(G) G D(G)$\emph{is} $\left( D(G)\right)^2$.

\section{\label{empIFT} Solution for the MED of LI Pure State Ensembles}

$\mathcal{X}$ is a function on $\mathcal{G}$ such that $\mathcal{X}(G)$ is a solution for $X$ in equation \eqref{EOY}, and is positive definite. We need to compute $\mathcal{X}(G)$ for a given $G \in \mathcal{G}$. We employ the Implicit Function Theorem (IFT) for this. 

\subsection{Functions and Variables for IFT}
\label{defFV}

In this subsection, we will introduce the functions and variables which are part of the IFT technique.

Let the unknown  hermitian matrix $X$ in equation \eqref{EOY} be represented by
\begin{equation}
\label{DGWlooks1}
X = \begin{pmatrix}
     x_{11}^2 & x_{12} + i x_{21} & x_{13} + i x_{31} & \cdots &  x_{1n}  + i x_{n1}  \\
  x_{12}  - i x_{21}  & x_{22}^2 & x_{23}  + i x_{32}  & \cdots &  x_{2n}  + i x_{n2}  \\
x_{13} - i x_{31}  & x_{23}  - i x_{32}   & x_{33}^2 & \cdots &  x_{3n}  + i x_{n3}  \\
\vdots & \vdots & \vdots & \ddots & \vdots \\
x_{1n}  - i x_{n1}  & x_{2n}  - ix_{n2}   & x_{3n} - i x_{n3}  & \cdots &  x_{nn}^2
                     \end{pmatrix},
\end{equation}

where $x_{kl} \in \mathbb{R}, \; \forall \; 1 \leq k, l \leq n $. 

Define $F$ on $\mathcal{G} \times \mathcal{H}_n$, where $\mathcal{H}_n$ is the real vector space of all $n \times n$ hermitian matrices.

\begin{mydef}
\begin{equation}
\label{defF}
F(G,X) \equiv X^2 - D(X) G D(X),
\end{equation}
where $X$ is of the form given in equation \eqref{DGWlooks1} and $D(X) \equiv Diag \left(x_{11},x_{22},\cdots,x_{nn}\right)$.
\end{mydef}

We define the matrix elements of $F$ as functions of $G$ and $x_{ij},$ $\forall\; 1 \le i,j \le n$.

\begin{equation}
\label{f}
F = \begin{pmatrix}
        f_{11} & f_{12} + i f_{21} & f_{13} + i f_{31} & \cdots &  f_{1n}  + i f_{n1}  \\
  f_{12}  - i f_{21}  & f_{22} & f_{23}  + i f_{32}  & \cdots &  f_{2n}  + i f_{n2}  \\
f_{13} - i f_{31}  & f_{23}  - i f_{32}   & f_{33} & \cdots &  f_{3n}  + i f_{n3}  \\
\vdots & \vdots & \vdots & \ddots & \vdots \\
f_{1n}  - i f_{n1}  & f_{2n}  - if_{n2}   & f_{3n} - i f_{n3}  & \cdots &  f_{nn}
                     \end{pmatrix},
\end{equation}

where, for $i<j$, $f_{ij}$ and $f_{ji}$ represent the real and imaginary parts of $F_{ij}$ respectively, and $f_{ii}$ represents the diagonal element $F_{ii}$, and for $j < i$, $f_{ji}$ and $-f_{ij}$ represent the real and imaginary parts of $F_{ij}$ respectively. Then using definition \eqref{defF} and equation \eqref{DGWlooks1} we get for $ i< j$

\begin{subequations}
\label{feq}
\begin{align}
\label{fij}
f_{ij}\left(G, \vv{x}\right) = \; & \sum_{k=1}^{i-1} \left(x_{ki}x_{kj} + x_{ik}y_{jk} \right) + \sum_{k=i+1}^{j-1} \left(x_{ik}x_{kj} - x_{ki}x_{jk} \right) \notag  +   \sum_{k=j+1}^{n} \left(x_{ik}x_{jk} + x_{ki}x_{kj} \right) \\ + & x_{ij}\left(x_{ii}^2 + x_{jj}^2\right) - x_{ii}x_{jj} Re\left(G_{ij}\right),
\end{align}
\begin{align}
\label{fji}
f_{ji}\left(G, \vv{x}\right) = \; & \sum_{k=1}^{i-1} \left(x_{ki}x_{jk} - x_{ik}x_{kj} \right) + \sum_{k=i+1}^{j-1} \left(x_{ik}x_{jk} + x_{ki}x_{kj} \right) \notag  +   \sum_{k=j+1}^{n} \left(-x_{ik}x_{kj} + x_{ki}x_{jk} \right) \\ + & x_{ji}\left(x_{ii}^2 + x_{jj}^2\right) - x_{ii}x_{jj} Im\left(G_{ij}\right),
\end{align}
and for the diagonal elements we get

\begin{align}
\label{fii}
f_{ii}\left(G, \vv{x}\right) = \; & \sum_{k=1}^{i-1} \left(x_{ki}^2 + x_{ik}^2 \right) + \sum_{k=i+1}^{n} \left(x_{ik}^2 + x_{ki}^2 \right) + 2 x_{ii}^4 - x_{ii}^2G_{ii},
\end{align}
where $\vv{x} \equiv (x_{11},x_{12},\cdots, x_{nn})$ (i.e., $\vv{x}$ is the real $n^2$-tuple of the $x_{ij}$-variables). 
\end{subequations}

Finally, we define the Jacobian of the functions $f_{ij}$ with respect to the variables $x_{ij}$; this Jacobian matrix has the matrix elements

\begin{equation}
\label{Jacobian}
\left(J \left( G, \vv{x} \right)\right)_{ij,kl} \equiv \dfrac{\partial{f_{ij}   \left(G,\vv{x} \right) }}{\partial{x_{kl}}}, \; \forall \; 1 \le i,j,k,l \le n.
\end{equation}

Note that since the $f_{ij}$ functions and the $x_{ij}$ variables are both $n^2$ in number, this Jacobian matrix is an $n^2 \times n^2$ square matrix.

\subsection{Implementing IFT}

Let $G_0 \in \mathcal{G}$ be a gram matrix whose MED for which we know the solution, that is, we know the values of $x_{ij} = d_{ij}(G_0)$, $\forall$ $1 \le i,j \le n$ (see definition \eqref{defrhokl}). Substituting $x_{ij} = d_{ij}(G_0)$, $\forall$ $1 \le i,j \le n$ in equation \eqref{DGWlooks1} gives us that $X = \mathcal{X}(G_0)$ (see equations \eqref{mathcalW} and definition \eqref{funmathD}), and substituting $X = \mathcal{X}(G_0)$ into equation \eqref{defF} gives (see theorem \ref{thm2}), 
\begin{enumerate}                                                                                                                                                                                                                                                                                                               
\item[(i.)] $ F \Big(G_0, \; X = \mathcal{X}(G_0) \; \Big) = 0$. This equation tells us that $X = \mathcal{X}(G_0)$ is a solution for $X$ in equation \eqref{EOY} when $G = G_0$.
\item[(ii.)] $X = \mathcal{X}(G_0) > 0$.                                                                                                                                                                                                                                                                                                                 \end{enumerate}
 
IFT, which is a well known result in functional analysis \cite{imp}, then tells us the following.

\paragraph{Implicit Function Theorem:} Consider the following inequality:
\begin{equation}\label{Detjac} Det \left( J(G_0,\vv{d}(G_0)) \right) \neq 0, \; \text{where} \; \vv{d}(G_0) = \left(d_{11}(G_0),d_{12}(G_0),\cdots,d_{1n}(G_0),d_{21}(G_0),\cdots,d_{nn}(G_0)\right).\end{equation} 
If the inequality \eqref{Detjac} is true, then IFT tells us that there exists an open neighbourhood $I_{G_0}$ in $\mathcal{G}$ containing $G_0$, such that for each $i,j$, where $1 \le i,j \le n$, there exists an open interval $I_{ij}$ in $\mathbb{R}$ containing the real number $d_{ij}(G_0)$, such that one can define the function $\phi_{ij}: I_{G_0} \longrightarrow I_{ij}$, such that
\begin{enumerate}
 \item $\phi_{ij}$'s are continuously differentiable in $I_{G_0}$,
 \item $\phi_{ij}(G_0) = d_{ij}(G_0)$, $\forall \; 1 \le i,j \le n$, and
 \item the following equation holds true for $\; \forall \; 1 \le i,j \le n$ and $\forall \; G \in I_{G_0}$:
 \begin{equation}
\label{impff1}
f_{ij} \left( G, \vv{\phi}(G) \right) = 0, \text{ where } \vv{\phi}(G) = ( \phi_{11}(G),\phi_{12}(G),\cdots, \phi_{nn}(G)).
\end{equation}
\end{enumerate}

Thus to use the IFT for our purpose we need to prove the following. 

\begin{theorem}
\label{Jacthm}
$Det \left( J\left( G,\vv{d}(G)\right)\right) \neq 0, \; \forall \; G \in \mathcal{G}$.
\end{theorem}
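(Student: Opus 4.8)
The plan is to prove the theorem in its equivalent form: for each fixed $G\in\mathcal{G}$ the linear map $J\big(G,\vv{d}(G)\big)$ has trivial kernel. Since $\delta F$ is Hermitian-valued, the condition $J\vv{h}=0$ is precisely the statement that the first variation of \eqref{defF} vanishes along $\vv{h}$; this is $n^2$ real equations in the $n^2$ components of $\vv{h}$, so triviality of the kernel is the same as $Det\,J\neq0$. Throughout I would use the crucial input from Theorem \ref{thm2}: at $\vv{x}=\vv{d}(G)$ the matrix $X$ equals the \emph{positive definite} solution $\mathcal{X}(G)$, and $\mathcal{X}(G)^2=DGD$.

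First I would translate a null vector $\vv{h}$ into matrix data. Writing the induced variation of $X$ as a Hermitian matrix $\delta X$ and that of $D$ as a real diagonal matrix $\delta D$, the square-root parametrization \eqref{DGWlooks1} (diagonal entries $x_{ii}^2$, off-diagonal entries free) forces the diagonal coupling $(\delta X)_{ii}=2x_{ii}(\delta D)_{ii}=2\mathcal{X}_{ii}\Delta_{ii}$, where I set $\Delta\equiv D^{-1}\delta D$ (a real diagonal matrix), while the off-diagonal part of $\delta X$ is unconstrained; note $\vv{h}=0\Leftrightarrow(\delta X,\Delta)=(0,0)$. Differentiating \eqref{defF} and using $G=D^{-1}\mathcal{X}^2D^{-1}$ (legitimate since $D$ is invertible and $\mathcal{X}^2=DGD$) to eliminate $G$, the equation $\delta F=0$ collapses to the Lyapunov-type identity
\[
\mathcal{X}\,\delta X+\delta X\,\mathcal{X}=\Delta\,\mathcal{X}^2+\mathcal{X}^2\,\Delta .
\]
Diagonalizing $\mathcal{X}=V\Lambda V^{\dagger}$ with $\Lambda=Diag(\lambda_1,\dots,\lambda_n)$, $\lambda_a>0$, and writing $\widehat{\delta X}=V^{\dagger}\delta X\,V$, $\widehat{\Delta}=V^{\dagger}\Delta V$, this identity decouples entrywise into $\widehat{\delta X}_{ab}=\frac{\lambda_a^2+\lambda_b^2}{\lambda_a+\lambda_b}\,\widehat{\Delta}_{ab}$, so $\delta X$ is completely determined by $\Delta$.

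The crux is to eliminate $\Delta$ by computing the scalar $Tr(\Delta\,\delta X)$ two ways and comparing signs. The displayed entrywise relation together with Hermiticity of $\widehat{\Delta}$ gives $Tr(\Delta\,\delta X)=\sum_{a,b}\frac{\lambda_a^2+\lambda_b^2}{\lambda_a+\lambda_b}\,|\widehat{\Delta}_{ab}|^2$. On the other hand the diagonal coupling $(\delta X)_{ii}=2\mathcal{X}_{ii}\Delta_{ii}$ gives $Tr(\Delta\,\delta X)=2\,Tr(\mathcal{X}\Delta^2)=\sum_{a,b}(\lambda_a+\lambda_b)|\widehat{\Delta}_{ab}|^2$ after symmetrizing in $a,b$. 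Subtracting the two expressions yields $\sum_{a,b}\big(\tfrac{\lambda_a^2+\lambda_b^2}{\lambda_a+\lambda_b}-(\lambda_a+\lambda_b)\big)|\widehat{\Delta}_{ab}|^2=-\sum_{a,b}\tfrac{2\lambda_a\lambda_b}{\lambda_a+\lambda_b}|\widehat{\Delta}_{ab}|^2=0$, in which every coefficient is \emph{strictly negative} because all $\lambda_a>0$. Hence $\widehat{\Delta}=0$, so $\Delta=\delta D=0$, and then the entrywise relation forces $\delta X=0$; therefore $\vv{h}=0$ and $J\big(G,\vv{d}(G)\big)$ is nonsingular.

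I expect the main obstacle to be the bookkeeping of the first reduction: correctly converting a null vector of the real $n^2\times n^2$ Jacobian into the pair $(\delta X,\Delta)$ with the precise diagonal coupling coming from the square-root parametrization \eqref{DGWlooks1}, and cleanly justifying the substitution $G=D^{-1}\mathcal{X}^2D^{-1}$. Once the linearized equation is put in Lyapunov form, positive definiteness of $\mathcal{X}$ (Theorem \ref{thm2}) does all the remaining work: it guarantees $\lambda_a+\lambda_b\neq0$ (so the entrywise solve for $\widehat{\delta X}$ is valid) and it makes the final quadratic form in $\widehat{\Delta}$ sign-definite, which is exactly what collapses the kernel to $\{0\}$.
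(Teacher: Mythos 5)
Your proof is correct, and its first two thirds coincide with the paper's own argument: the paper likewise identifies $J(G,\vv{d}(G))$ with the linearization $\delta X \mapsto \delta X\mathcal{X}+\mathcal{X}\delta X - D_\delta D^{-1}\mathcal{X}^2-\mathcal{X}^2D^{-1}D_\delta$ (your Lyapunov identity, with $\Delta=D_\delta D^{-1}$; see equation \eqref{J}), eliminates $G$ through $\mathcal{X}^2=DGD$ exactly as you do, and derives in the eigenbasis of $\mathcal{X}$ the same entrywise relation $\widehat{\delta X}_{ab}=\frac{\lambda_a^2+\lambda_b^2}{\lambda_a+\lambda_b}\widehat{\Delta}_{ab}$ (equations \eqref{ij2}--\eqref{ij3}). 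Where you genuinely diverge is the endgame. The paper projects the relation onto the diagonal entries, assembles the $n\times n$ system $\bigl(D^2-\tfrac{1}{2}O\Lambda O^\dag\bigr)D^{-2}\ket{D_{\delta X}}=0$ built from the $n\times n^2$ matrix $O_{k,ij}=\is{k}{g_i}{g_j}{k}$, and then proves — ``after some amount of tedious algebra'' — the matrix identity $D^2=\tfrac{1}{2}O(\Lambda+\Lambda')O^\dag$ of equation \eqref{ij6}, from which positive definiteness of $\tfrac{1}{2}O\Lambda'O^\dag$ kills the kernel. You instead pair the Lyapunov relation against the structural diagonal coupling $(\delta X)_{ii}=2\mathcal{X}_{ii}\Delta_{ii}$ by evaluating the single scalar $Tr(\Delta\,\delta X)$ two ways, obtaining $\sum_{a,b}\frac{2\lambda_a\lambda_b}{\lambda_a+\lambda_b}\lvert\widehat{\Delta}_{ab}\rvert^2=0$ in one line of symmetrization; this is precisely the quadratic form of the paper's identity \eqref{ij6} evaluated on the one vector that matters, so you bypass the $O,\Lambda,\Lambda'$ bookkeeping and the tedious algebra entirely. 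What the paper's route buys is a slightly stronger statement (positive definiteness of $D^2-\tfrac{1}{2}O\Lambda O^\dag$ as a matrix, not merely along the putative kernel direction); what yours buys is brevity and transparency — positivity of $\mathcal{X}$ enters only through $\lambda_a+\lambda_b>0$ (validity of the entrywise solve) and strict negativity of the coefficients $-\frac{2\lambda_a\lambda_b}{\lambda_a+\lambda_b}$. Two cosmetic remarks: your $\Delta$ clashes with the paper's later use of $\Delta=G_1-G_0$ in section \ref{TayAna}, so rename it if this is spliced in; and the positive-definite evaluation point $X=\mathcal{X}(G)$ rests on existence from theorem \ref{thmret} together with the uniqueness/positivity content of theorem \ref{thm2}, so both should be cited.
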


\begin{proof} This proof is divided into two parts:
\begin{enumerate}
 \item[(a.)] To show that $J(G,\vv{d}(G))$ is a linear transformation on the real space $\mathcal{H}_n$ of $n \times n$ complex hermitian matrices: \medbreak 
 Proof of (a.): \begin{subequations}First note that $J(G,\vv{d}(G))$ is the Jacobian of the function $F$ with respect to the variable $X$ (equation \eqref{defF}). 
 
 Let $x_{ij}$ be assigned the value $d_{ij}(G)$ for all $1 \le i, j \le n$. Now let $x_{ij}=d_{ij}(G) \longrightarrow x_{ij}=d_{ij}(G) + \epsilon \delta x_{ij}$ be an arbitrary perturbation, where $\epsilon$ is an infinitesimal positive real number and $\delta x_{ij}$ are real, $\forall$ $1 \le i,j \le n$. As a result of this perturbation we have the following 
 \begin{enumerate}
 \item[(i)] $\left(x_{ii}(G)\right)^2 = $  $\left(d_{ii}(G)\right)^2 \longrightarrow \left(d_{ii}(G)\right)^2 + 2 \epsilon d_{ii}(G) \delta x_{ii} + \mathcal{O}(\epsilon^2)$, and 
 \item[(ii)] $X = \mathcal{X}(G) \longrightarrow X = \mathcal{X}(G) + \epsilon \delta X + \mathcal{O}(\epsilon^2)$ where 
 \begin{equation}
\label{DeltaX}
\delta X = \begin{pmatrix}
        2 d_{11}(G)\delta x_{11} & \delta x_{12} + i \delta x_{21} & \delta x_{13} + i \delta x_{31} & \cdots &  \delta x_{1n}  + i \delta x_{n1}  \\
  \delta x_{12}  - i \delta x_{21}  & 2 d_{22}(G) \delta x_{22} & \delta x_{23}  + i \delta x_{32}  & \cdots &  \delta x_{2n}  + i \delta x_{n2}  \\
\delta x_{13} - i \delta x_{31}  & \delta x_{23}  - i \delta x_{32}   & 2 d_{33}(G) \delta x_{33} & \cdots &  \delta x_{3n}  + i \delta x_{n3}  \\
\vdots & \vdots & \vdots & \ddots & \vdots \\
\delta x_{1n}  - i \delta x_{n1}  & \delta x_{2n}  - i\delta x_{n2}   & \delta x_{3n} - i \delta x_{n3}  & \cdots &  2 d_{nn}(G) \delta x_{nn}                     \end{pmatrix}.
\end{equation}

For the sake of brevity, for the rest of this proof, we will denote $D(G)$ by $D$, $\mathcal{X}(G)$ by $\mathcal{X}$, and $J(G,\vv{d}(G))$ by $J_G$. Define:

\begin{equation}
\label{Ddelta}
D_\delta \equiv Diag(\delta x_{11},\delta x_{22},\cdots,\delta x_{nn}).
\end{equation}
\end{enumerate}

Thus we get the following. 
\begin{align}
\label{Fpert}
&      F\left(G, \mathcal{X} + \epsilon \delta X\right) \notag \\
= \; & F\left(G,\mathcal{X}\right) + \epsilon \left( \delta X \mathcal{X} + \mathcal{X} \delta X -  D_{\delta } G  D - D G D_{\delta } \right) + \mathcal{O}(\epsilon^2)  \notag \\
= \; & \epsilon \left( \delta X \mathcal{X} - D_{\delta }D^{-1}\mathcal{X}^2 + \mathcal{X} \delta X -  \mathcal{X}^2 D^{-1} D_{\delta } \right) + \mathcal{O}(\epsilon^2) \notag \\ 
= \; & \epsilon J_G \left(\delta X \right)+ \mathcal{O}(\epsilon^2),
\end{align}

where equation \eqref{EOY} was employed in the second step above, and

\begin{equation}
\begin{split}
J_G\left( \delta X \right) & = \; \delta X \mathcal{X} - D_{\delta }D^{-1}\mathcal{X}^2 + \mathcal{X} \delta X -  \mathcal{X}^2 D^{-1} D_{\delta }\\                              
\label{J}
& = \;  \left(\delta X \mathcal{X} - D_{\delta }D^{-1}\mathcal{X}^2 \right) + \left(\delta X \mathcal{X} - D_{\delta }D^{-1}\mathcal{X}^2 \right)^\dag.
\end{split}
\end{equation}

Thus it is seen that $J_G$ is a linear transformation on the space of $n \times n$ complex hermitian matrices $\mathcal{H}_n$. 
\end{subequations}
 
 \item[(b.)] If the action of $J(G,\vv{d}(G))$ on some $n \times n$ complex hermitian matrix $\delta X$ is $0$ then $\delta X$ itself must be $0$.\medbreak
 Proof of (b.): From equation \eqref{J} it is clear that $J_G\left( \delta X \right) =0$ if and only if $\delta X \mathcal{X} - D_{\delta }D^{-1}\mathcal{X}^2$ is anti-hermitian.
\begin{subequations}
Let's assume that $\delta X \mathcal{X} - D_{\delta }D^{-1}\mathcal{X}^2$ is anti-hermitian. That is, 
\begin{align}
\label{ij1}
& \delta X \mathcal{X} - D_{\delta }D^{-1} \mathcal{X}^2 =  - \mathcal{X} \delta X  +   \mathcal{X}^2 D^{-1} D_{\delta } \notag \\
\Longrightarrow \; & \mathcal{X}^{-1} \delta X  - \mathcal{X}^{-1}  D_{\delta }D^{-1} \mathcal{X} =  -  \delta X \mathcal{X}^{-1}  + \mathcal{X} D^{-1} D_{\delta } \mathcal{X}^{-1}.
\end{align}

Let $\mathcal{X} = \sum_{i=1}^{n} g_i \ketbra{g_i}{g_i}$ be the spectral decomposition of $\mathcal{X}$. Then the $ij$-th matrix element of the matrix in equation \eqref{ij1}, in the $\{ \ket{g_i}\}_{i=1}^{n}$ basis, is given by
\begin{align*}
& \dfrac{1}{g_i} \pk{g_i}{\delta X}{g_j} - \dfrac{g_j}{g_i} \pk{g_i}{D_{\delta } D^{-1}}{g_j} \notag \\ = \;  - & \dfrac{1}{g_j} \pk{g_i}{\delta X}{g_j} + \dfrac{g_i}{g_j} \pk{g_i}{D_{\delta } D^{-1}}{g_j} 
\end{align*}
\begin{equation}
\label{ij2}
\Longrightarrow \pk{g_i}{\delta X}{g_j} = \left( \dfrac{g_i^2 + g_j^2}{g_i + g_j} \right)\pk{g_i}{D_{\delta } D^{-1}}{g_j}.
\end{equation}
Multiplying the above number by $\ketbra{g_i}{g_j}$ and summing over $i,j$ from $1$ to $n$ gives
\begin{equation}
\label{ij3}
\delta X = \sum_{i,j=1}^{n} \pk{g_i}{D_{\delta } D^{-1}}{g_j} \dfrac{g_i^2 + g_j^2}{g_i + g_j} \ketbra{g_i}{g_j}.
\end{equation}
Let $\{\ket{k}\}_{k=1}^{n}$ represent the standard basis, then $\braket{k}{g_j}$ is complex number occuring in the $k$-th entry of $\ket{g_j}$. Using equations \eqref{Ddelta} and \eqref{DeltaX} we get $\pk{g_i}{D_{\delta } D^{-1}}{g_j} =\frac{1}{2} \sum_{l=1}^{n}  \ir{g_i}{l}{g_j}  \dfrac{(\delta X)_{ll}}{ d_{ll}(G)^2} $. The diagonal elements of $\delta X$ are then given by 
\begin{equation}
\begin{split}
\label{ij4}
(\delta X)_{kk} = \; & \sum_{l=1}^{n} \left( \dfrac{1}{2} \sum_{i,j=1}^{n} \is{k}{g_i}{g_j}{k} \dfrac{g_i^2 + g_j^2}{g_i + g_j}  \ir{g_i}{l}{g_j} \right) \dfrac{(\delta X)_{ll}}{ \left(d_{ll}(G)\right)^2} \\
 =  \; &  \sum_{l=1}^{n} \dfrac{1}{2} \left( O \Lambda O^\dag \right)_{kl}\dfrac{(\delta X)_{ll}}{ \left(d_{ll}(G)\right)^2},
\end{split}
\end{equation}

where $O$ is an $n \times n^2$ matrix with matrix elements given by $O_{k,ij} = \is{k}{g_i}{g_j}{k}$, $\Lambda$ is an $n^2 \times n^2$ diagonal matrix with matrix elements $\Lambda_{ij,kl} =  \delta_{ik}\delta_{jl} \dfrac{g_i^2 + g_j^2}{g_i + g_j}$. It is easy to check that rows of $O$ are orthogonal. Since $\Lambda > 0$ and $O$ is of rank $n$, the matrix $\dfrac{1}{2} O \Lambda O^\dag$ positive definite. \medbreak Consider 

\begin{equation}
\label{Ddelta1}
\ket{D_{\delta X }} \equiv \begin{pmatrix}
     (\delta X)_{11}\\                                                                                                                                                                                                                                                                                                                                                                                                                                                                                                                                                                                                                                  
     (\delta X)_{22}\\                                                                                                                                                                                                                                                                                                                                                                                                                                                                                                                                                                                                                                  
     \vdots \\
     (\delta X)_{nn}\\                                                                                                                                                                                                                                                                                                                                                                                                                                                                                                                                                                                                                                  
     \end{pmatrix}.
\end{equation}
Then equation \eqref{ij4} can be rewritten as 
\begin{align}
\label{ij5}
& \left(\mathbb{1} - \dfrac{1}{2} O \Lambda O^\dag  D^{-2} \right) \ket{D_{\delta X }} = 0 \notag \\ 
\Longrightarrow  \; & \left(D^2 - \dfrac{1}{2} O \Lambda O^\dag \right) D^{-2} \ket{D_{\delta X }} = 0 
\end{align}
Let $\Lambda'$ be an $n^2 \times n^2$ diagonal matrix whose matrix elements are given by $\Lambda'_{ij,kl} =  \delta_{ik}\delta_{jl} \dfrac{2 g_i g_j}{g_i + g_j}$. Since $\Lambda' > 0$, $\dfrac{1}{2} O \Lambda' O^\dag$ is positive definite. After some amount of tedious algebra we find that the following equation holds true.
\begin{align} \label{ij6}
D^2 = \dfrac{1}{2} O \left( \Lambda + \Lambda' \right) O^\dag.                                                                                                                                                                                                                                                                                                                                                                                                                                                                                                           \end{align}
Hence $D^2 - \dfrac{1}{2}O\Lambda O^\dag$ = $\dfrac{1}{2} O \Lambda' O^\dag >0$. This implies that for equation \eqref{ij5} to be true $\ket{D_{\delta X }} = 0$. This implies (see equation \eqref{Ddelta1}) $(\delta X)_{ii}= 0$, which implies that $ 2 d_{ii}(G)\delta x_{ii} = 0$, which implies that $x_{ii}=0$, i.e., $D_\delta = 0$. Substituting $D_{\delta } = 0$ in equation \eqref{ij3} gives $\delta X = 0$. 

Hence, demanding $J_G\left( \delta X \right)=0 $ leads to the conclusion that $\delta X =0$. 
\end{subequations}
\end{enumerate}
This means that $J_G$ is non-singular, which then implies that $ Det\left(J_{G}\right) \neq 0 $. This proves the theorem.
\end{proof}

Theorem \ref{Jacthm} implies that IFT holds true for all $G_0 \in \mathcal{G}$, i.e., for all $G_0 \in \mathcal{G}$ one can define these $\phi_{ij}$ functions so that the points 1., 2. and 3. mentioned in IFT are satisfied. The third point in IFT, i.e., equation \eqref{impff1}, tells us that for any $G \in I_{G_0}$, $F(G, X) = 0$, when $x_{ij} = \phi_{ij}(G)$, $\forall \; 1 \leq i ,j \le n$. This is equivalent to stating that if one obtains the $\phi_{ij}$ functions, defined in some open neighbourhood $I_{G_0}$ of $G_0$, then $x_{ij}=\phi_{ij}(G), \; \forall \; 1 \le i,j \le n$, gives us \emph{some} solution for $X$ in equation \eqref{EOY} for any $G \in I_{G_0}$. If it is true that assigning $x_{ij} = \phi_{ij}(G), \; \forall \; 1 \le i,j \le n$, implies that $X > 0$, then obtaining the $\phi_{ij}$ functions in some open neighbourhood $I_{G_0}$ of $G_0$ gives us \emph{the} solution for MED of all $G \in I_{G_0}$.

\begin{theorem}
\label{Xge0}
When $G \in I_{G_0}$ and $x_{ij} = \phi_{ij}(G)$, $\forall \; 1 \leq i ,j \le n$, then $X>0$. 
\end{theorem}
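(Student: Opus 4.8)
The plan is to prove this by a connectedness (clopen) argument on the set where $X$ is positive definite, using as the decisive ingredient the non-vanishing of the Jacobian from Theorem \ref{Jacthm}. Write $X_{\vv{\phi}}(G)$ for the hermitian matrix assembled as in \eqref{DGWlooks1} from $x_{ij}=\phi_{ij}(G)$; by the third conclusion of the IFT this solves \eqref{EOY}, i.e. $X_{\vv{\phi}}(G)^2 = D(G)\,G\,D(G)$ with $D(G)=Diag(\phi_{11}(G),\dots,\phi_{nn}(G))$, and the $\phi_{ij}$ are continuous on $I_{G_0}$, so $X_{\vv{\phi}}$ is a continuous matrix-valued map there. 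I would first record two standing facts: (i) $X_{\vv{\phi}}(G_0)=\mathcal{X}(G_0)>0$; and (ii) taking $I_{G_0}$ to be a ball (hence connected), one has $Det\,J\bigl(G,\vv{\phi}(G)\bigr)\neq 0$ for every $G\in I_{G_0}$ — this is exactly the non-degeneracy the IFT maintains on its neighbourhood, and it is legitimate because $Det\,J$ is continuous along the solution branch and is nonzero at $G_0$ by Theorem \ref{Jacthm}. I then set $S=\{G\in I_{G_0}: X_{\vv{\phi}}(G)>0\}$: it is nonempty by (i) and open because positive-definiteness is an open condition and $X_{\vv{\phi}}$ is continuous. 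Everything reduces to showing $S$ is closed in $I_{G_0}$.

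To establish closedness, take $G_\ast\in I_{G_0}$ in the closure of $S$. Continuity gives $X_{\vv{\phi}}(G_\ast)\geq 0$, and I must upgrade this to strict positivity. Suppose not; then $X_{\vv{\phi}}(G_\ast)$ is positive semidefinite but singular, so $Det\,X_{\vv{\phi}}(G_\ast)=0$. Taking determinants in $X_{\vv{\phi}}(G_\ast)^2=D(G_\ast)\,G_\ast\,D(G_\ast)$ yields $\bigl(Det\,X_{\vv{\phi}}(G_\ast)\bigr)^2=\bigl(Det\,D(G_\ast)\bigr)^2\,Det\,G_\ast$; since the left-hand side vanishes while $Det\,G_\ast>0$, some diagonal entry must vanish, say $\phi_{kk}(G_\ast)=0$, and hence $\bigl(X_{\vv{\phi}}(G_\ast)\bigr)_{kk}=\phi_{kk}(G_\ast)^2=0$.

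The key structural step is to combine positive semidefiniteness with this zero diagonal entry. Because $X_{\vv{\phi}}(G_\ast)\geq 0$ with a vanishing $(k,k)$ entry, every $2\times 2$ principal submatrix through position $k$ forces the off-diagonal entries to vanish, so the entire $k$-th row and column of $X_{\vv{\phi}}(G_\ast)$ are zero; reading this off the parametrization \eqref{DGWlooks1}, every variable carrying the index $k$ is zero at $G_\ast$, that is $x_{kl}=x_{lk}=0$ for all $l$ (including $x_{kk}=0$). I would then inspect the row of the Jacobian \eqref{Jacobian} belonging to the diagonal function $f_{kk}$. By \eqref{fii}, $f_{kk}$ depends only on the variables $\{x_{kl},x_{lk}\}_{l}$ together with $x_{kk}$, with $\partial f_{kk}/\partial x_{kl}=2x_{kl}$, $\partial f_{kk}/\partial x_{lk}=2x_{lk}$ and $\partial f_{kk}/\partial x_{kk}=2x_{kk}\bigl(4x_{kk}^2-G_{kk}\bigr)$, and all remaining partials identically zero. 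Evaluated at $G_\ast$, where all of these variables vanish, the entire $f_{kk}$-row of $J\bigl(G_\ast,\vv{\phi}(G_\ast)\bigr)$ is zero, whence $Det\,J\bigl(G_\ast,\vv{\phi}(G_\ast)\bigr)=0$, contradicting fact (ii). Therefore $X_{\vv{\phi}}(G_\ast)>0$, $S$ is closed, and by connectedness $S=I_{G_0}$, proving the claim.

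I expect the genuine obstacle to be precisely the passage from the pointwise positivity at $G_0$ to positivity throughout $I_{G_0}$: the determinant identity by itself cannot prevent an eigenvalue of $X_{\vv{\phi}}$ from crossing zero, and what rescues the argument is that at any such crossing the perfect-square structure of the diagonal in \eqref{DGWlooks1} forces the corresponding $f_{kk}$-row of the Jacobian to vanish, colliding with Theorem \ref{Jacthm}. A secondary but essential bookkeeping point is the right to assume $Det\,J\neq 0$ on the whole connected neighbourhood; I would state this explicitly, since it is the hypothesis the contradiction ultimately exploits. (Once strict positivity is in hand, Theorem \ref{thm2} identifies the resulting $X_{\vv{\phi}}(G)$ with $\mathcal{X}(G)$ and hence with the optimal measurement, so that the $\phi_{ij}$ genuinely solve MED on all of $I_{G_0}$.)
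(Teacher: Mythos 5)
Your proof is correct in substance, but it takes a genuinely different route from the paper's. The paper also argues by contradiction at a first singular point, but along a line segment $G(t)=(1-t)G_0+tG_1$ and with a global topological input: for $t<t'$ positivity identifies $X$ with $\mathcal{X}(G(t))$, so $X^2/Tr(X^2)=\mathscr{R}_\mathcal{G}(G(t))$ by equation \eqref{GXR}; continuity of $\mathscr{R}_\mathcal{G}$ (obtained earlier from the invariance-of-domain theorem together with the bijectivity of $\mathscr{R}^{-1}$ from \cite{Carlos}) then forces $\mathscr{R}_\mathcal{G}(G(t'))$ to be singular, contradicting that $\mathscr{R}_\mathcal{G}$ maps $\mathcal{G}$ into $\mathcal{G}$, a set of positive definite matrices. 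You replace this global input with a purely local, algebraic one: the factorization $(\det X)^2=(\det D)^2\det G$ pins any degeneracy on a vanishing diagonal entry, positive semidefiniteness then kills the entire $k$-th row and column, and the perfect-square parametrization \eqref{DGWlooks1} makes the $f_{kk}$-row of the Jacobian \eqref{Jacobian} (computed from \eqref{fii}) vanish identically, so an eigenvalue crossing zero would force $\det J=0$ along the branch. This is self-contained --- it needs neither the continuity of $\mathscr{R}_\mathcal{G}$ nor invariance of domain --- and it exposes a nice structural fact the paper's route hides: the Jacobian of $F$ degenerates exactly when a diagonal entry of the solution branch hits zero, which is why Theorem \ref{Jacthm} is stated only along the positive definite solution $\vv{d}(G)$.

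The one point you must repair is your ``fact (ii)''. As justified in your text it does not hold: continuity of $\det J$ along the branch plus nonvanishing at $G_0$ does not give nonvanishing on a neighbourhood of prescribed size, and Theorem \ref{Jacthm} cannot be invoked at $(G_\ast,\vv{\phi}(G_\ast))$, because in the hypothetical singular scenario $\vv{\phi}(G_\ast)\neq\vv{d}(G_\ast)$ --- that identification is exactly what the theorem is trying to establish. The legitimate fix is the one you gesture at: since $\det J$ is a polynomial in $(G,\vv{x})$, nonzero at $(G_0,\vv{d}(G_0))$ by Theorem \ref{Jacthm}, and $\vv{\phi}$ is continuous, one may shrink $I_{G_0}$ (as the standard IFT construction permits) so that $(G,\vv{\phi}(G))$ stays inside the open set where $\det J\neq 0$; your clopen argument then closes. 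Because the paper's IFT statement asserts only the existence of \emph{some} open neighbourhood $I_{G_0}$, proving the theorem for such a (possibly smaller) choice carries the same content, and all later uses --- the local identification $\phi_{ij}=d_{ij}$ via Theorem \ref{thm2}, and the analyticity of the $d_{kl}$ along trajectories --- are unaffected. Make the shrinking explicit rather than appealing to continuity at $G_0$ alone, and your argument is complete.
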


\begin{proof}
Suppose not. 

Let there be some $G_1 \in I_{G_0}$ such that when $x_{ij}= \phi_{ij}(G_1)$, $\forall$ $1 \le i,j \le n$, then $X$ has some non-positive eigenvalues. 

Let $G(t) \equiv (1-t)G_0 + tG_1$ be a linear trajectory in $\mathcal{G}$. $G(t)$ starts from $G_{0}$ when $t=0$ and ends at $G_1$ when $t=1$. Note that eigenvalues of $X$ are continuous functions of $x_{ij}$, and when restricting $x_{ij}$ to be such that $x_{ij} = \phi_{ij}(G)$, $\forall \; 1 \le i,j \le n$, then $x_{ij}$ are continuous functions over $I_{G_0}$. Thus the eigenvalues of $X$ are continuous over $I_{G_0}$, whenever $x_{ij} = \phi_{ij}(G)$. 

This implies the following.

\begin{enumerate}
\item[(i.)] When $x_{ij} = \phi_{ij}(G(0))$ $\forall \; 1 \leq i, j \le n$, all eigenvalues of $X$ are positive. 
\item[(ii.)] When $x_{ij} = \phi_{ij}(G(1))$ $\forall \; 1 \leq i, j \le n$, some eigenvalues of $X$ are non-positive. 
\end{enumerate}

The intermediate value theorem tells us that since $\phi_{ij}$'s are continuous over $I_{G_0}$, (i.) and (ii.) imply that there must be some $t' \in (0,1]$, such that
\begin{enumerate}                                                                                                                                                                 \item[(i.)] $X>0$, when $x_{ij}=\phi_{ij}(G(t))$, for all $t \in [0, t')$, 
\item[(ii.)] for all $t \in ( t',1]$, $X$ is not necessarily positive definite, when $x_{ij}=\phi_{ij}(G(t))$, and finally
\item[(iii.)] $X$ has some $0$ eigenvalue(s) when $x_{ij}=\phi_{ij}(G(t'))$, i.e., when $t=t'$.
\end{enumerate}

When $t<t'$ then $X>0$, which also implies that $X = \mathcal{X}(G(t))$ holds true for the interval $t \in [0,t')$. Since $ \frac{\left(\mathcal{X}(G)\right)^2}{Tr\left( \left(\mathcal{X}(G)\right)^2\right)}  = \mathscr{R}_\mathcal{G}(G)$ \footnote{See equation \eqref{GXR} in the proof of theorem \ref{thmret} in section \ref{SSGMM}.} we get that $\frac{X^2}{Tr\left(X^2\right)}  =  \mathscr{R}_\mathcal{G}\left(G(t)\right)$, when $t<t'$. Since $\mathscr{R}_\mathcal{G}$ is continuous on $\mathcal{G}$ \footnote{See description below definition \ref{defG} in section \ref{intro}.} and eigenvalues of $X$ are continuous in $I_{G_0}$, it follows that when $t=t'$, $\frac{ X^2}{Tr\left(X^2\right)} =  \mathscr{R}_\mathcal{G}\left(G(t')\right)$. From (iii.) above it is seen that when $t=t'$, $X$ is singular; this implies that $\mathscr{R}_\mathcal{G}\left(G(t')\right)$ is singular as well, which is a contradiction since we know that $\mathscr{R}_\mathcal{G}$ is a function from $\mathcal{G}$ to $\mathcal{G}$ and all gram 
matrices in $\mathcal{G}$ are 
positive definite. 

This contradiction arose from the assumption that when $x_{ij} = \phi_{ij}(G_1)$, $X$ is not positive definite. This proves the theorem.
\end{proof}

Theorem \ref{Xge0} tells us that for any starting point $G_0 \in \mathcal{G}$, if we take any point $G \in I_{G_0}$, the $\phi_{ij}$'s obey the equality: $\phi_{ij}(G) = d_{ij}(G)$, $\forall \; 1 \leq i, j \le n$. Given this fact, from here onwards, we will represent the implicit functional dependence $\phi_{ij}$ by $d_{ij}$ itself. 

We can make a stronger statement about the behaviour of the functions $d_{ij}$ on $\mathcal{G}$. It is easier to do so if we define trajectories, like the one defined in the proof of theorem \ref{Xge0} in $\mathcal{G}$, and prove results about the behaviour of the $d_{ij}$'s with respect to the independent variable $t$. For that purpose, let $G_0, G_1 \in \mathcal{G}$ be distinct; define a linear trajectory in $\mathcal{G}$ from $G_0$ to $G_1$, $G:[0,1]\longrightarrow \mathcal{G}$ as \begin{equation} \label{Gt} G(t) = (1-t) G_0 + t G_1.\end{equation} We now apply the implicit function theorem to $F(G(t), X)$, where $X$ represents the variables whose implicit dependence we seek and $t$ is the independent variable. 

The analytic implicit function theorem \cite{imp} tells us that if $f_{ij}\left(G(t),\vv{x}\right)$ are analytic functions of the variables $t$ and $x_{kl}$, then $\phi_{kl}(G(t))$ (which are equal to $ d_{kl}(G(t))$) should also be analytic functions of the variable $t$ $\in [0,1]$. Equations \eqref{fij}, \eqref{fji} and \eqref{fii} tell us that $f_{ij}(G(t),\vv{x})$ are multivariate polynomials in the variables $t$ and $x_{kl}$, which implies that the $f_{ij}$'s are analytic functions of $t$ and $x_{kl}$. Thus $d_{kl}(G(t))$ are analytic functions of the variable $t$. This implies that, more generally, $d_{kl}$ are analytic functions over $\mathcal{G}$.

\subsection{Taylor Series and Analytic Continuation} 
\label{TayAna}

The fact that $d_{kl}$ are analytic functions on $\mathcal{G}$ allows us to Taylor expand $d_{kl}$ from some point in $\mathcal{G}$ to another point. Let us assume that we want to find the solution for MED of some gram matrix $G_1 \in \mathcal{G}$, and that we know the solution for MED of $G_0 \in \mathcal{G}$. Then we define a trajectory as was done in equation \eqref{Gt}. We will now show that using equation \eqref{calEOY} we can obtain the derivatives of $d_{kl}(G(t))$, upto any order, with respect to $t$; this allows us to Taylor expand the $d_{kl}(G(t))$ function about the point $t=0$. Analytically continuing from $t=0$ to $t=1$ allows us to obtain the values of $d_{kl}(G(t))$ at $t=1$.

First we show how to obtain the first order derivatives of $d_{kl}(G(.))$ with respect to $t$. We will abbreviate $D(G(t))$ $=$ $(d_{11}(G(t)),d_{22}(G(t)),\cdots,d_{nn}(G(t)))$ as $D(t)$ for convenience. Similarly $\mathcal{X}(G(t))$ will be abbreviated as $\mathcal{X}(t)$. It will be useful to denote separately the matrix  of off-diagonal elements of $\mathcal{X}(t)$. Thus define $N(t) \equiv \mathcal{X}(t) - (D(t))^2$. Equation \eqref{calEOY} can be re-written as $\left(D(t)^2+N(t)\right)^2 =D(t)G(t)D(t)$. Let $\Delta \equiv \frac{d G(t)}{dt} = G_1 - G_0$. Taking the total first order derivative on both sides of equation \eqref{calEOY} gives

\begin{equation}
\label{1order}
\begin{array}{c} 
\left( D(t)^2 + N(t) \right) \left( 2 D(t) \dfrac{d D(t)}{dt} + \dfrac{d N(t)}{dt} \right)+ \left( 2 D(t) \dfrac{d D(t)}{dt} + \dfrac{d N(t)}{dt} \right) \left( D(t)^2 + N(t) \right) \\   -  \; (D(t)G(t))\dfrac{d D(t) }{dt} -  \dfrac{d D(t) }{dt}G(t)D(t)  
\end{array} =  D(t) \Delta D(t),
\end{equation} where \begin{enumerate}
                      \item[] $\dfrac{d D(t)}{dt} = \left(\dfrac{d \; \left( d_{11}(t) \right)}{dt},\dfrac{d \; \left( d_{22}(t) \right)}{dt},\cdots,\dfrac{d \; \left( d_{nn}(t) \right)}{dt}\right)$,
                      \item[] $\left(\dfrac{dN(t)}{dt}\right)_{kl}= \dfrac{d}{dt} \left( d_{kl}(t) + i d_{lk}(t)\right)$ (when $k<l$),
                      \item[] $\left(\dfrac{dN(t)}{dt}\right)_{ii}= 0$, and 
                      \item[] $\left(\dfrac{dN(t)}{dt}\right)_{kl}= \dfrac{d}{dt}\left(d_{lk}(t) - i d_{kl}(t)\right)$ (when $k>l$).
                     \end{enumerate}
Thus we get $n^2$ coupled ordinary differential equations.  By substituting the values of $d_{kl}(0)$ in equation \eqref{1order} one can solve for $\frac{d \; d_{kl}(t)}{dt}{|}_{t=0}$. \smallbreak

The second order derivatives can be obtained in a similar fashion: taking the total derivative of LHS and RHS of the equation \eqref{1order} with respect to $t$ (i.e. the second order derivative of the LHS of equation \eqref{calEOY}) we get a set of $n^2$ coupled second order differential equations. Setting $t=0$ and using the values of $d_{kl}(0)$ and $\frac{d \; d_{kl}(t)}{dt}{|}_{t=0}$, one can solve the resulting (linear) equations to obtain the values of the unknowns $\frac{d^2 \; d_{kl}(t)}{dt^2}{|}_{t=0}$. \smallbreak

Continuing in this manner one can obtain the values of the derivatives of $d_{kl}(t)$ upto any order, at the point $t=0$. In the following equation we give the $k$-th order derivative of equation \eqref{calEOY} for this purpose.

\begin{equation}
\begin{split}
\label{1norder}
& \Big( D(t)^2 + N(t) \Big)\Big( 2 D(t) \dfrac{d^k D(t)}{dt^k} + \dfrac{d^k N(t)}{dt^k} \Big)+\Big( 2 D(t) \dfrac{d^k D(t)}{dt^k} + \dfrac{d^k N(t)}{dt^k} \Big) \Big( D(t)^2 + N(t) \Big) \\
 & - (D(t)G(t))\dfrac{d^k D(t) }{dt^k} -  \dfrac{d^k D(t) }{dt^k}G(t)D(t)\\ 
= & -\Bigg( \Big(D(t)^2+N(t)\Big)\Big( \sum_{l_1=1}^{k-1} \binom{k}{l_1} \big( (\frac{d}{dt} )^{l_1} D(t) \big) \big( (\frac{d}{dt} )^{k-l_1} D(t) \big)    \Big )  + \text{h.c.} \Bigg) \\
 & - \sum_{l_2=1}^{k-1} \binom{k}{l_2} \Big( (\frac{d}{dt})^{k_1} (D(t)^2 + N(t)) \Big)\Big( (\frac{d}{dt})^{k-k_1} (D(t)^2 + N(t)) \Big) \\
 & + \sum_{m_1=1}^{k-1} \binom{k}{m_1} \Big( (\frac{d}{dt})^{m_1} D(t) \Big)G(t)\Big( (\frac{d}{dt})^{k-m_1} D(t) \Big) \\
 & + k \sum_{m_2=0}^{k-1} \binom{k}{m_2} \Big( (\frac{d}{dt})^{m_2} D(t) \Big) \Delta \Big( (\frac{d}{dt})^{k-m_2} D(t) \Big).
\end{split}
\end{equation}

By substituting the values of all derivatives at $t=0$, one can expand the $d_{kl}$ functions about the point $t=0$. Analytic continuation is straightforward: by using the aforementioned Taylor expansion about $t=0$, one obtains the value of $d_{kl}$ at some $t = \delta t > 0$; one can then use the aforementioned method to obtain the values of derivatives of $d_{kl}$ at $t=\delta t$ and Taylor expand the $d_{kl}$ functions from $\delta t$ to $t > \delta t$. In this manner one can Taylor expand and analytically continue $d_{kl}$'s from $t=0$ to the point $t=1$.

The need for analytic continuation raises the following question: what is the radius of convergence for the Taylor series about some point $t$ in the interval $[0,1]$? The LHS of equations \eqref{1order} and \eqref{1norder} gives us a hint: $\frac{d^k D(t)}{dt^k}$ and $\frac{d^k N(t)}{dt^k}$ scale proportionally to the $k$-th power of $\Delta$ i.e.,

\begin{equation}
\begin{split}
\label{scaling}
\Delta   \longrightarrow \nu \Delta \Longrightarrow  \left( \dfrac{d^k D(t)}{dt^k},\dfrac{d^k N(t)}{dt^k}\right)   \rightarrow \left( \nu^k \dfrac{d^k D(t)}{dt^k}, \nu^k \dfrac{d^k N(t)}{dt^k}\right).
\end{split}
\end{equation}
This tells us that we need to keep $|| \Delta ||_2$ small to ensure that either $G_1$ falls within the radius of convergence of the $d_{kl}$ functions when expanded about the point $G_0$ or the number of times one is required to analytically continue from $t=0$ to $t=1$ is low. It is very difficult to obtain the exact radius of convergence for every point in $\mathcal{G}$ since the value of the radius of convergence differs for different points in $\mathcal{G}$\footnote{Particularly as one gets closer to points near the boundary of $\mathcal{G}$ (which lies outside $\mathcal{G}$), the radius of convergence becomes smaller.}.

For a given $G_1$ for which we wish to find the solution, it is desirable to find a $G_0$ so that $G_1$ falls within the radius of convergence of the $d_{kl}$ functions, when expanded about $G_0$. In the following we give a method to find such a $G_0$ for a given $G_1$. 

\subsubsection{Starting point which generally doesn't require analytic continuation}
\label{sub1}

Let $G_0 \in \mathcal{G}$ be some gram matrix with the property that the diagonal of the positive definite square root of $G_0$, i.e., $G_0^\frac{1}{2}$ has the property $\left(G_0^\frac{1}{2}\right)_{11}$ $=$ $\left(G_0^\frac{1}{2}\right)_{22}$ $=$ $\cdots$ $=$ $\left(G_0^\frac{1}{2}\right)_{nn}$. Substituting  $G^\frac{1}{2} = G_0^\frac{1}{2}$, along with $W = \mathbb{1}_n$ and $D = \left(G_0^\frac{1}{2}\right)_{11} \mathbb{1}_n$ in the LHS of equation \eqref{DGWhermit} gives us the RHS of equation \eqref{DGWhermit}, i.e., they all satisfy equation \eqref{DGWhermit}. It is also seen that when $D = \left(G_0^\frac{1}{2}\right)_{11} \mathbb{1}_n$, then $X = D G_0^\frac{1}{2}$ is a solution for the equation \eqref{EOY} for $G = G_0$, and since $D$ is a multiple of $\mathbb{1}_n$, $X >0$. Thus, when the diagonal of $G_0^\frac{1}{2}$ is a multiple of $\mathbb{1}_n$, the solution for the MED of corresponding gram matrix $G_0$ is known\footnote{ This result is well known. It corresponds to those cases when $\
mathscr{R}\left(\widetilde{P}\right) = \widetilde{P}$.}.

Thus for a given $G_1$, we want to find a starting point $G_0$ such that the diagonal elements of $G_0^{\frac{1}{2}}$ are all equal. For this purpose expand the positive square root of $G_1$ i.e., $G_1^{\frac{1}{2}}$ in an ONB of $\mathcal{H}_n$, which comprises of $\frac{\mathbb{1}}{\sqrt{n}}$ and the generalized Gell-Man matrices $\frac{\sigma_{lk}}{\sqrt{2}}$ where $ 1 \leq l,k \leq n $\cite{bloch}. Here the $ \sigma_{lk}$ matrices are defined as
\begin{equation}
\label{gell}
\sigma_{lk} = \left\{ \begin{array}{lr}
		 \ketbra{l}{k} + \ketbra{k}{l},  \; \text{ when } l<k, \\  \\
		 i\ketbra{l}{k} - i \ketbra{k}{l}),  \; \text{ when } l>k, \\ \\
		 \sqrt{ \frac{2}{l (l+1) } } \left( \sum_{j=1}^{l} \ketbra{j}{j} - l \ketbra{l+1}{l+1} \right)\delta_{lk}, \text{ when } 1 \leq l \leq n-1 
\end{array} \right.
\end{equation}
All generalized Gell-Mann matrices in equation \eqref{gell} have Hilbert-Schmidt norm $\sqrt{2}$. Let $G_1^\frac{1}{2}$ have the following expansion in these Gell-Mann matrices.

\begin{equation}
\label{G1exp}
G_1^\frac{1}{2} =  \gamma \dfrac{\mathbb{1}}{\sqrt{n}} + \sum_{j=1}^{n-1} \beta_j \dfrac{\sigma_{jj}}{\sqrt{2}} +  \sum_{\underset{l \neq k}{ l , k=1}}^{n} \zeta_{lk} \dfrac{\sigma_{lk}}{\sqrt{2}},
\end{equation}

where $\zeta_{lk}, \beta_j, \gamma$ are real numbers. Note that $\gamma>0$ s $G_1^\frac{1}{2}$. Based upon this define $G_0^{\frac{1}{2}}$ as

\begin{equation}
\label{G0exp}
G_0^\frac{1}{2} = \kappa \dfrac{\mathbb{1}}{\sqrt{n}} +  \sum_{l \neq k} \zeta_{lk} \dfrac{\sigma_{lk}}{\sqrt{2}},
\end{equation} where $\kappa =  \sqrt{ \gamma^2 + \sum_{j} \beta_j^2}$.

It is easily verified that $Tr(G_0) =1$. One needs to check if $G_0^\frac{1}{2} >0$ or not. Generally, it is true that $G_0^\frac{1}{2} > 0$. But if some eigenvalues of $G_1$ are close to $0$, this \emph{may} not hold. Suppose it holds (as is generally the case), $d_{ii}(0) = \kappa, \; d_{kl}(0)=Re \left( \left(G_0^\frac{1}{2} \right)_{kl}\right), \text{ when } k<l, \; d_{kl}(0)=Im\left( \left(G_0^\frac{1}{2} \right)_{lk} \right), \text{ when } k>l$.  $G_1$ generally falls within the radius of convergence of all $d_{kl}$ functions about the starting point $G_0$. In such circumstances one doesn't need analytic continuation; one can straightforwardly calculate $d_{kl}(1)$ from the Taylor series about $t=0$. If $G_0^\frac{1}{2}$, obtained this way, isn't positive definite, then this method fails and one needs another starting point.

\subsubsection{Starting points which generally require analytic continuation}
\label{sub2}

Another possible starting point is an ensemble of equiprobable orthogonal states; this ensemble's gram matrix is $G_0 = \frac{\mathbb{1}}{n}$ where $d_{ij}(0)=\delta_{ij}\frac{1}{\sqrt{n}}$. To \emph{drag} the solution from $G_0$ to $G_1$ one needs to divide the $[0,1]$ interval into subintervals and analytically continue the $d_{kl}$'s from the starting point of each subinterval to its corresponding ending point. Here it needs to be ensured that one doesn't overshoot beyond the radius of convergence of any of the $d_{kl}$ functions at the starting point of each subinterval. For this purpose it was found that it generally suffices to divide $[0,1]$ into \large$\lceil$\normalsize$  n^2 || \Delta  ||_2 $\large$\rceil$\normalsize subintervals. Generally the smaller the intervals, the lower the value of error. 

\subsubsection{Error-Estimation}
\label{ex1}

There is a simple method to estimate the degree of error in the process; this is based on the fact that when the solution, i.e., $d_{kl}(1)$'s are substituted in the LHS of equation \eqref{EOY} one should obtain the zero matrix, which isn't what we get due to errors. Thus the value of the Hilbert-Schmidt norm of the quantity on the LHS, i.e., the value of $|| (D(1)^2 + N(1))^2 - D(1)G(1)D(1))||_2$ gives us an estimate of the degree of error which has accumulated into the solution. The closer $|| \left(D(1)^2 + N(1)\right)^2 - D(1)G(1)D(1))||_2$ is to $0$, the lower the error. Note that \emph{one cannot decrease the error significantly by increasing the order upto which the Taylor series is expanded beyond a order of expansion. On the other hand error rates can be substantially reduced by decreasing the size of the subintervals}.

Thus having solved for $d_{kl}(1)$ with a high degree of accuracy, one can now obtain the optimal POVM. In the following we present an example for $n=5$. Note that while the precision  of the starting point is upto $20$ significant digits, only the first $6$ significant digits have been displayed. For lack of space sometimes quantities have been displayed with upto only $4$ significant digits.  \medbreak

\begin{tabular}{l l} 
$\tket{\psi}{1} = \left[\begin{array}{c} 
                               0.320457  \\ 0.123687 + i0.0117558  \\  0.117838+ i-0.027942   \\   0.109674+ i0.0167151   \\   0.0860555+ i0.00780123     
                              \end{array}\right]$ & 
$\tket{\psi}{2} = \left[\begin{array}{c} 
                              0.123687 -i0.0117558  \\ 0.397851 \\  0.169692 -i0.0506685 \\   0.125198 -i0.0244774 \\   0.124106 -i0.0261114 
                              \end{array}\right]$ \\ ~ & ~ \\
$\tket{\psi}{3} = \left[\begin{array}{c} 
                              0.117838 + i0.027942  \\ 0.169692 + i0.0506685  \\  0.404725  \\   0.13847+ i0.0177653   \\   0.122277-i0.0249506 
                              \end{array}\right]$ &
$\tket{\psi}{4} = \left[\begin{array}{c} 
                              0.109674  -i0.0167151  \\ 0.125198 + i0.0244774  \\  0.13847-i0.0177653 \\   0.373791 \\   0.110387 -i0.013984  
                              \end{array}\right]$ \\ ~ & ~ \\
$\tket{\psi}{5} = \left[\begin{array}{c} 
0.0860555 -i0.00780123  \\ 0.124106 + i0.0261114  \\  0.122277+ 
i0.0249506   \\   0.110387+ i0.013984   \\   0.33677                        
                              \end{array}\right] $. \end{tabular}  
\smallskip

The corresponding $\tket{u}{i}$ states are given by: \smallskip

\begin{tabular}{l l}
$\tket{u}{1} = \left[\begin{array}{c} 
                               3.93887   \\ -0.668108 + i0.0313699  \\  -0.553671 + i0.331697  \\   -0.611991 - i0.234777   \\   -0.375925 - i0.264517    
                              \end{array}\right]$ & 
$\tket{u}{2} = \left[\begin{array}{c} 
                             -0.668108 - i0.0313699  \\ 3.52494 \\  -0.939093 + i0.353801   \\   -0.418308 + i0.204928   \\   -0.685643 + i0.0142212     
                              \end{array}\right]$ \\ ~ & ~ \\
$\tket{u}{3} = \left[\begin{array}{c} 
                              -0.553671 - i0.331697  \\ -0.939093 - i0.353801  \\  3.50731   \\ -0.634577 - i0.0903887   \\   -0.554402 + i0.418281  
                              \end{array}\right]$ &
$\tket{u}{4} = \left[\begin{array}{c} 
                              -0.611991 + i0.234777  \\ -0.418308 - i0.204928  \\  -0.634577 + i0.0903887   \\   3.42828   \\   -0.568152 + i0.0597921     
                              \end{array}\right]$ \\ ~ & ~ \\
$\tket{u}{5} = \left[\begin{array}{c} 
-0.375925 + i0.264517  \\ -0.685643 - i0.0142212  \\  -0.554402 - i0.418281   \\   -0.568152 -i0.0597921   \\   3.74634                         
                              \end{array}\right] $.\end{tabular}   
\medbreak

The gram matrix for the ensemble $\{ \tket{\psi}{i} \}_{i=1}^{5}$, i.e., $G_1$ is given by:
\footnotesize
\medbreak
\begin{align*}
&G_1 = \\
&\left[ \begin{array}{lllll}
0.15257   & 0.13405  -i0.017665  &  0.13285+ i0.021068   &   0.11811-i0.010337   &   0.098267+ i0.0026888   \\  
0.13405 + i0.017665  & 0.23744   &  0.18316+ i0.051216   &   0.14883+ i0.02325   &   0.13487+ i0.034111   \\  
0.13285 -i0.021068  & 0.18316 -i0.051216  &  0.24489   &   0.15659-i0.020010   &   0.13850+ i0.013294   \\  
0.11811 + i0.010337  & 0.14883 -i0.023258  &  0.15659+ i0.020010   &   0.20017   &   0.12067+ i0.016377   \\  
0.098267 -i0.0026888  & 0.13487 -i0.034111  &  0.13850-i0.013294   &   0.12067-i0.016377   &   0.16492  
\end{array}\right].
\end{align*}
\normalsize
\medbreak
Then using equation \eqref{G0exp}, we have
\footnotesize
\begin{align*}
&G_0^\frac{1}{2} = \\
&\left[ \begin{array}{lllll}
0.36821  & 0.12368  -i0.011755  &  0.11783+ i0.02794   &   0.10967-i0.016715   &   0.086055-i0.0078012   \\  
0.12368 + i0.011755  & 0.36821 &  0.16969+ i0.050668   &   0.12519+ i0.024477   &   0.12410+ i0.026111   \\  
0.11783 -i0.02794  & 0.16969-i0.050668  &  0.36821   &   0.13847-i0.017765   &   0.12227+ i0.024950   \\
0.10967 + i0.016715  & 0.12519 -i0.024477  &  0.13847+ i0.017765 &   0.36821   &   0.11038+ i0.013984   \\ 
0.086055 + i0.0078012  & 0.12410 -i0.026111  &  0.12227-i0.024950   &   0.11038-i0.01398   &   0.36821     
\end{array}\right].
\end{align*}

\normalsize
\medbreak
We see that all the diagonal elements of $G_0^\frac{1}{2}$ are all equal. Also $G_0^\frac{1}{2}>0$. Thus $d_{ii}(0)$ is equal to the diagonal elements of $G_0^\frac{1}{2}$ and $d_{kl}(0)$ are assigned values of the off-diagonal elements of ${G_0}^\frac{1}{2}$ (when $i \neq j)$. 

Here $|| \Delta ||_2 = ||G_1 - G_0||_2 =  0.058777 \sim 1/5^{2} \; ( \; =0.04 \;)$. This gives us the indication that $t=1$ lies within the radius of convergence of the implicitly defined functions $d_{kl}$ about the point $t=0$ and that no analytic continuation is required at any intermittent point. Upon employing the Taylor series expansion and expanding the series upto $10$-th term, we obtain the solution for $\mathcal{X}(1)= {D(1)}^2+N(1)$: 
\footnotesize
\begin{align*}
&\mathcal{X}(1)= {D(1)}^2+N(1) = \\
&\left[ \begin{array}{lllll}
0.09627  & 0.04197  -i0.00407  &  0.04054+ i0.009487   &   0.03528-i0.005896   &   0.02484-i0.003121   \\
0.04197 + i0.00407  & 0.1635 &  0.07237+ i0.02128   &   0.04981+ i0.009339   &   0.04439+ i0.008852   \\   
0.04054 -i0.009487  & 0.07237 -i0.02128  &  0.1710   &   0.05580 -i0.00729   &   0.04424+ i0.008926   \\
0.03528 + i0.005896  & 0.04981 -i0.009339  &  0.05580+ i0.00729   &   0.1399   &   0.03732+ i0.004563   \\ 
0.02484 + i0.003121  & 0.04439 + i-0.008852  &  0.04424-i0.008926   &   0.03732-i0.004563   &   0.1083
\end{array}\right].
\end{align*}

\normalsize
\medbreak

$\mathcal{X}(1)>0$ holds true.

$d_{11}(1) = 0.310278, \; d_{22}(1) = 0.404377, \; d_{33}(1)=0.413591, \; d_{44}(1)=0.374064, \; d_{55}(1)=0.329225$.

The maximum success probability, $P_s^{max}= \sum_{i=1}^{n} (d_{ii}(1))^2 = 0.679164$.

$|| (\mathcal{X}(1))^2 - D(1)G(1)D(1)||_2 =  2.92337 \times 10^{-9}$.

For lack of space the projectors $\ketbra{w_i}{w_i}$ aren't given here. Instead we give the ONB $\{ \ket{w_i} \}_{i=1}^{n}$: 

\medbreak
\begin{tabular}{l l}
$\ket{w_1} = \left[ \begin{array}{c} 0.998902 -i 0.000902941 \\ -0.0294294 -i0.00140465  \\  -0.0281464+ i0.0114238   \\   -0.0185558-i 0.00595048   \\  -0.00450716-i0.00157192      \end{array} \right]$ & 
$\ket{w_2} = \left[ \begin{array}{c} 0.0295208 -i0.00161874  \\ 0.999231 -i0.000890303  \\  -0.00479151+ i0.00107801   \\   0.00760396-i0.00239694   \\   0.0239121-i0.00195944        \end{array} \right]$ \\ ~ & ~ \\
$\ket{w_3} = \left[ \begin{array}{c} 0.0285947 + i0.0113073  \\ 0.00328547 + i0.000850588  \\  0.999104+ i0.000230941   \\ 0.0125773+ i0.00210581   \\   0.0237566-i0.0103508    \end{array} \right]$ &
$\ket{w_4} = \left[ \begin{array}{c} 0.0179661 -i0.00612077  \\ -0.00850615 -i0.00226113  \\  -0.0132936+ i0.00235778   \\  0.999616+ i0.00060647   \\   0.0121594-i0.000373086   \end{array} \right]$ \\ ~ & ~ \\
$\ket{w_5} = \left[ \begin{array}{c} 0.00301285 -i0.00208885 \\ -0.0240121 -i 0.00194482  \\ -0.0235693-i0.0103215  \\   -0.0127196-i0.000528417  \\   0.99929+ i0.000965318 \end{array} \right]$.\end{tabular} \medbreak

Despite having satisfied the rotationally invariant conditions (refer theorem \ref{thm2}), we would like to see if both the conditions \eqref{St} and \eqref{Glb} are satisfied. Instead of checking condition \eqref{St} we check if $Z$, from equation \eqref{Z}, is hermitian or not. We first use $\{ \ket{w_i} \}_{i=1}^{n}$ to  compute the operator $Z$. We measure the non-hermiticity of $Z$ as $\frac{1}{2}|| Z - Z^\dag ||_2$, which takes the value $2.22059\times 10^{-10}$ for our example. That $Z$  is hermitian (within error) and satisfies equation \eqref{Z} implies that equations \eqref{cslack} or equivalently equations \eqref{St} are satisfied. Additionally we find that $\forall \; 1 \leq i \leq n$, all except one eigenvalue of $Z - p_i \ketbra{\psi_i}{\psi_i}$ are positive. For each $i=1,2,\cdots,n$ the non-positive eigenvalue of $Z-p_i \ketbra{\psi_i}{\psi_i}$ is either $0$ or of the order $10^{-10}$, showing that the condition \eqref{Glb} is also satisfied. Thus we have demonstrated an example of obtaining 
the optimal POVM for MED of an ensemble of $5$ LI states. 

\subsection{Algorithms: Computational Complexity}
\label{algcompl}

In the following we outline the algorithm for the Taylor series expansion method, which gives us the solution for the MED of a given $n$-LIP ensemble. The method has already been elucidated in detail in subsubsection \ref{sub2}. After giving the algorithm, we give its time complexity\footnote{The time complexity of any algorithm is given by the order of the total number of elementary steps involved in completing said algorithm. In this paper, each of the following are regarded as elementary steps: basic arithmetic operations (addition, subtraction, multiplication, division) of floating point variables,assigning a value to a floating-point variable, checking a condition and retrieving the value of a variable stored in memory.} and space complexity\footnote{The space complexity is the count of the total number of variables and constants used in algorithms. These variables and constants can be of floating point type, integer type, binary etc; in this paper we treat them all alike while adding the number of 
variables to give us the final count. Similar to the case of the time complexity, space complexity too is given in terms of the \emph{order} of the count, rather than the exact number.}. The acceptable tolerance error being assumed here is of the order $10^{-9}$, and the time and space complexities are computed corresponding to this acceptable error margin.  \bigskip

\textbf{Algorithm 1: Taylor Series} The algorithm of the Taylor series method (subsubsection \ref{sub2}) is given in the following steps.

\begin{enumerate}
 \item[(1)] Construct the gram matrix $G_1$ from the given ensemble $\widetilde{P}$. Choose an appropriate starting point $G(0)=G_0$ (for which the solution $d_{ij}(0)$, for all $1 \le i,j \le n$, is known) and define the function $G(t)= (1-t)G_0 + t G_1$. If $|| \Delta ||_2 n^2 \sim 1$ then there's no need to divide the interval $[0,1]$ into subintervals, but otherwise divide $[0,1]$ into $L\equiv \left\lceil|| \Delta ||_2 n^{2}\right\rceil$ intervals.
\item[(2)] For $l=0,1,2,\cdots,L-1$, set $t_l= \dfrac{l}{L}$ and iterate over each interval in the following manner:
\begin{enumerate}
\item	[(2.1)] For $k = 1$ through $k = K$ iterate the following:
solve eqn \eqref{1norder} for $\dfrac{d^k d_{ij}}{dt^k}|_{t=t_l}$, for all $ 1 \le i, j \le n$, by using values of lower order derivatives as explained in subsection \ref{TayAna}.

\item[(2.2)] Having obtained the values of derivatives $\dfrac{d^{k} d_{ij}}{dt^k}|_{t=t_l}$ upto $K$-th order for all $1 \le i,j \le n$, substitute these derivatives in an expression of the Taylor series expansion of the $d_{ij}$ functions about the point $t=t_l$, when expanded to $K$-th order. The resulting expressions will give $K$-th degree polynomials in the variable $t$ for each $1 \le i, j \le n$, i.e, for each $d_{ij}$. Obtain the value of $d_{ij}(t_{l+1})$ by computing the value these polynomials take at $t=t_{l+1}$. Then increment $t$ to $t_{l+1}$, go to (2) and iterate. Stop when $l=L$. 
\end{enumerate}
\end{enumerate}

In the following table we give the time and space complexity of various steps in the aforementioned algorithm.\medbreak

\small
\begin{tabular}{ | p{1em} |p{25em}|p{7em}| p{7em} |}
 \hline
 \multicolumn{4}{|c|}{Computational Complexity for Taylor Series Method} \\
 \hline
 & Step in the algorithm     & \small{Time Complexity} & \small{Space Complexity} \\
 \hline
 1. & Computing $G_1$ from $\widetilde{P}$  & $\mathcal{O}(n^3)$    & $\mathcal{O}(n^2)$\\
 2. & Computing $G_0$ from $G_1$, as in subsubsection \ref{sub1}  & $\mathcal{O}(n^3)$    & $\mathcal{O}(n^3)$\\
 3. & Solving for $\frac{d^k \; d_{ij}}{dt^k}|_{t=t_l}$ for $k = 1,2,\cdots,K$  &   $\mathcal{O}(K n^6)$  & $\mathcal{O}(K n^4)$\\
 4. & Computing Taylor series expansion of $d_{ij}(t-t_l)$ upto $K$-th order at $t=t_{l+1}$ & $\mathcal{O}(Kn^2)$ & $\mathcal{O}(n^2)$ \\
 5. & Repeating steps 3. and 4. over $L \simeq n^2 || \Delta ||_2$ times & $\mathcal{O}(K^2 n^8)$ & $\mathcal{O}( n^6)$ \\
 \hline
\end{tabular} \medbreak
\normalsize

Note that the algorithm is polynomial in $n$. It is expected that to maintain the acceptable error margin (i.e., $||\left(D(1)^2+N(1)\right)^2$ $-$ $D(1)G(1)D(1) ||_2$ $\lesssim$ $10^{-9}$) as $n$ increases, one would have to increase the value of $K$ as well. While the numerical examples we checked support this hypothesis, the required increment of $K$ to compensate the increase in the value of $n$ is seen to be significant only over large variations of values of $n$ (when $n$ varies over a range of $20$). Indeed, it remains almost constant for $n = 3$ to $n=10$ for the error to remain within the margin of the order of $10^{-9}$. As in the example given in subsubsection \ref{ex1}, choosing $K=10$ suffices to maintain the error within said margin.

If $|| \Delta  ||_2 n^2 \simeq 1$, analytic continuation isn't required and then the total time complexity of the algorithm is $\mathcal{O}(n^6)$ and the total space complexity of the algorithm is $\mathcal{O}(n^4)$. In case $|| \Delta  ||_2 n^2 > 1$, since the maximum value of $|| \Delta ||_2 \le 2$, analytic continuation is required, and in that case, the worst case time and space complexities\footnote{That is, worst-case corresponding to the value of $|| \Delta ||_2$.} are given by $\mathcal{O}( n^8)$ and $\mathcal{O}( n^6)$ respectively.

While the Taylor series method is polynomial in time with a relatively low computational complexity, it is seen that directly employing Newton's method is simpler and more computationally efficient. We will now explain how to employ Newton's method. 

\textbf{Algorithm 2: Newton's Method} This is a well known numerical technique for solving non-linear equations. We use it here to solve the equations $f_{ij}(G, \vv{x}) =0$, $\forall$ $1 \le i,j \le n$,  (see  \eqref{fij}, \eqref{fji} and \eqref{fii} for $f_{ij}$) where $G$ is the gram matrix of the ensemble $\widetilde{P}$ whose MED we want to solve for, and $\vv{x}$ are the variables which - we demand - will converge to the solution $\vv{d}(G)$. This convergence is achieved over a few iterations which are part of the algorithm. The technique is based on a very simple principle which we will now elaborate. 

The Taylor expansion of the $f_{ij}(G,.)$ functions, when expanded about the point $\vv{d}(G)$, can be approximated by the first order terms for small perturbations $\vv{d}(G)$ $\longrightarrow$ $\vv{d}(G)+\delta \vv{x}$ as seen in the equation below.

\begin{equation}
\label{tay1}
\begin{split}
f_{ij}(G, \vv{d}(G) + \delta \vv{x}) &  \approx \; f_{ij}(G, \vv{d}(G)) +  \sum_{k,l=1}^{n} \left( \dfrac{\partial f_{ij}(G,\vv{x})  }{\partial x_{kl}  } |_{\vv{x}=\vv{d}(G)}\right) \delta x_{kl} \\
&  = \;  \sum_{k,l=1}^{n} \left(J_{G}\right)_{ij,kl} \delta x_{kl},
\end{split}
\end{equation}
where we have used $f_{ij}(G, \vv{d}(G)) =0, \; \forall \; 1 \le i,j \le n$, and where we denote $\left(J_{G}\right)_{ij,kl} \equiv \dfrac{\partial f_{ij}(G,\vv{x})  }{\partial x_{kl}  } |_{\vv{x}=\vv{d}(G)}$. 

We want to obtain the value of the solution $\vv{d}(G)$. We assume that our starting point is $\vv{d}(G)+\delta \vv{x}$ which is close to $\vv{d}(G)$, so that $f_{ij}(G,\vv{d}(G)+\delta \vv{x})$ can be approximated as the RHS of equation \eqref{tay1}. Denote the inverse of the Jacobian $J\left(G,\vv{d}(G)\right)$ by $\left(J_{G}\right)^{-1}$\footnote{In theorem \eqref{Jacthm} we proved that the Jacobian is non-singular, so we know that the inverse \emph{will} exist.}. Then we get

\begin{equation}
\label{tay2}
\sum_{k,l=1}^{n} \left( \left(J_{G}\right)^{-1} \right)_{ij,kl} f_{kl}(G, \vv{d}(G) + \delta \vv{x}) \approx \delta x_{ij}, \; \forall \; 1 \le i,j \le n.
\end{equation}
 
Subtracting $\delta \vv{x}$ from $\vv{d}(G)+\delta \vv{x}$ gives us $\vv{d}(G)$, which is the required solution. The catch here is that since we do not know the solution $\vv{d}(G)$ to start with, we cannot compute the Jacobian $J(G,\vv{d}(G))$. But since $\vv{d}(G) + \delta \vv{x}$ is close to $\vv{d}(G)$, we can approximate $J(G,\vv{d}(G))$ by $J(G,\vv{d}(G)+ \delta \vv{x})$, which we can compute. So we use $J(G,\vv{d}(G)+ \delta \vv{x})$ in place of $J(G,\vv{d}(G))$ in the algorithm, particularly, instead of using $\left( J_{G} \right)^{-1}$, computed at $\vv{d}(G)$, in equation \eqref{tay2}, we use it when computed at the point $\vv{d}(G) + \delta \vv{x}$.

The description of the principle behind Newton's method clarifies the algorithm, whose steps we list below. 

Starting with $x_{ij}^{(0)}=\dfrac{1}{\sqrt{n}} \delta_{ij}$ (for all  $1 \le i,j \le n$), $k=1$ and assuming $\epsilon = 10^{-9}$, iterate
\begin{enumerate}
\item[(1)] Substitute $x_{ij}^{(k-1)}$ into the functions $f_{ij}(G,.)$ defined in equations \eqref{fij}, \eqref{fji} and \eqref{fii}. Arrange all the $f_{ij}$'s in a single column, which will have $n^2$ rows; we will denote this $n^2$-row long column by $\gamma^{(k-1)}$. 
\item[(2)] Stop when $|| \gamma^{(k-1)} ||_2 < \epsilon$.
\begin{enumerate}
\item[(2.1)] Compute the Jacobian, $J_G^{(k-1)}$, where $\left(J_G^{(k-1)}\right)_{ij,st} = \dfrac{\partial f_{ij}(G,\vv{x}) }{\partial x_{st} } $ at the point $\vv{x} = \vv{x}^{(k-1)}$. 
\item[(2.2)] Compute the the inverse of $J_G^{(k-1)}$ i.e. $\left( J_G^{(k-1)} \right)^{-1}$.
\item[(2.3)] $x_{ij}^{(k)} =x_{ij}^{(k-1)} -  \left( \left( J_G^{(k-1)} \right)^{-1} \gamma^{(k-1)} \right)_{ij}$.
\end{enumerate}
\end{enumerate}

For each $n=3$ to $n=20$, we tested approximately twenty-thousand different examples, each of which for we obtained the required solution within the margin error. What's more, it was also seen that the maximum number of iterations required to maintain the error tolerance was constant over this range of $n$, more specifically, for each of these examples we required the number of iterations to be ten. Since the number of iterations required doesn't increase with $n$ (or increases very slowly), the computational complexity (time and space) of this algorithm is determined by the cost of steps within an iteration. Keeping this in mind, we give the computational complexity (time and space) of this algorithm in the following table. \medbreak

\small
\begin{tabular}{ | p{1em} |p{25em}|p{7em}| p{7em} |}
 \hline
 \multicolumn{4}{|c|}{Computational Complexity for Newton's Method} \\
 \hline
 & Step in the algorithm     & \small{Time Complexity} & \small{Space Complexity} \\
 \hline
 1. & Computing values of $f_{ij}(G,\vv{x}^{(k-1)})$, by substituting $\vv{x}^{(k-1)}$ into equations \eqref{fij},\eqref{fji} and \eqref{fii}, for all $ 1\le i,j \le n$  & $\mathcal{O}(n^2)$    & $\mathcal{O}(n^2)$\\
 
 2. & Computing the Jacobian, $J_G^{(k-1)}$ at the point $\vv{x}^{(k-1)}$  & $\mathcal{O}(n^4)$    & $\mathcal{O}(n^4)$\\
 
 3. & Computing the inverse of the Jacobian $\left(J_G^{(k-1)}\right)^{-1}$ from the Jacobian, at the point $\vv{x}^{(k-1)}$  &   $\mathcal{O}(n^6)$  & $\mathcal{O}(n^4)$\\
 
 4. & Computing $\vv{x}^{(k)}$ using $\left(J_G^{(k-1)}\right)^{-1}$ and $\vv{x}^{(k-1)}$ (point 2.3 in the list of steps of this algorithm above) & $\mathcal{O}(n^4)$ & $\mathcal{O}(n^2)$ \\
 
  \hline
\end{tabular}\normalsize \medbreak

Thus we see that the time complexity of Newton's method is $\mathcal{O}(n^6)$ and the space complexity is $\mathcal{O}(n^4)$. The number of steps involved are lower than in the Taylor series method, making this algorithm simpler, and also the computational complexity (both time and space) of Newton's method is lower than that of the Taylor series' method when one cannot find a close enough starting point $G_0$ to the given point $G_1$ in the latter method.

Let's compare the efficiency of these methods to that of an SDP algorithm. We will employ an SDP algorithm known as the Barrier-type Interior Point Method (IPM) \cite{Boyd}. 

\textbf{Algorithm 3: Barrier-type IPM (SDP)} The SDP problem corresponding to MED is given by \eqref{dual}. The objective of this problem is to minimize the value of $Tr(Z)$ subject to the constraints: $Z \ge p_i \ketbra{\psi_i}{\psi_i},$ $\forall$ $i=1,2,\cdots,n$.

In this method we obtain $Z$ which solves \eqref{dual} over a series of iterations, known as outer iterations. One starts the $k$-th such iteration with an input $Z^{(k-1)}$ - a candidate for $Z$ - and ends with an output $Z^{(k)}$, which will serve as the input for the next iteration. The $Z^{(k)}$, which are successive approximations for $Z$, take values within the feasible region, i.e, the region given by the set $\{ Z \text{ is $n \times n$, positive definite} \; | \;Z \ge p_i \ketbra{\psi_i}{\psi_i}$, $\forall$ $1 \le i \le n\}$. If $Z$ lies in the interior of this feasible region then it is such that $Z > p_i \ketbra{\psi_i}{\psi_i}$, $\forall$ $1 \le i \le n$, whereas if $Z$ is a boundary point of the feasible region then there is some $i = 1,2,\cdots, n$ such that $Z - p_i \ketbra{\psi_i}{\psi_i}$ has at least one zero eigenvalue.

In the first iteration, one starts with some \emph{strictly} feasible $Z=Z^{(0)}$, i.e., some $Z^{(0)}$ which lies in the interior of the feasible region. To ensure that $Z^{(k)}$ remains within the feasible region one perturbs the objective function which is being minimized: instead of performing an unconstrained minimization of $Tr(Z)$, one performs an unconstrained minimization of $Tr(Z) - \dfrac{1}{w}\sum_{i=1}^{n} Log(Det(Z-p_i\ketbra{\psi_i}{\psi_i}))$, where $\dfrac{1}{w}$ is a weight factor. The reason behind subtracting the expression $\dfrac{1}{w}$ $\sum_{i=1}^{n} Log(Det(Z-p_i \ketbra{\psi_i}{\psi_i}))$ from $Tr(Z)$ for unconstrained minimization, is that the expression $Log(Det(Z-p_i\ketbra{\psi_i}{\psi_i}))$ tends to infinity as $Z$ approaches the boundary of the feasible region. Thus performing unconstrained minimization of $Tr(Z)-\dfrac{1}{w }\sum_{i=1}^{n} Log(Det(Z-p_i\ketbra{\psi_i}{\psi_i}))$ will ensure that while the candidates for $Z$, viz, $Z^{(k)}$, may inch closer to 
the boundary of the feasible region, they will never cross it.

The unconstrained minimization of $Tr(Z) - \dfrac{1}{w}\sum_{i=1}^{n} Log(Det(Z-p_i\ketbra{\psi_i}{\psi_i}))$ is performed using Newton's method. The iterations within Newton's method are known as inner iterations. Newton's Method is performed as follows: using the generalized Gell-Mann basis, expand $Z = \sum_{i,j=1}^{n} y_{ij} \frac{\sigma_{ij}}{\sqrt{2}}$, where $\sigma_{nn} = \text{\small$\sqrt{\frac{2}{n}}$} \mathbb{1}_n$. Obtain the equations 
\begin{equation}
\begin{split}
\label{hij}
h_{kl}(\vv{y}) & \equiv \; \dfrac{\partial \; \left( Tr(Z) - \dfrac{1}{w}\sum_{i=1}^{n} Log(Det(Z-p_i\ketbra{\psi_i}{\psi_i})) \right)}{\partial \; y_{kl}} \\ 
= & \; \sqrt{n} \delta_{k,n}\delta_{l,n} - \dfrac{1}{w} \sum_{i=1}^{n} Tr \left(   \left( Z - p_i \ketbra{\psi_i}{\psi_i} \right)^{-1} \frac{\sigma_{kl}}{\sqrt{2}} \right).
\end{split}
\end{equation}

We want to solve for the equations $h_{ij}=0,$ $\forall$ $1 \le i,j \le n$ using Newton's method. The algorithm is the same as the one described above. These equations give the stationary points of the functions $h_{ij}$, $\forall \; 1 \le i, j \le n$. The matrix elements of the Jacobian of the $h_{ij}$ functions with respect to the $y_{kl}$ variables take the following form\footnote{This isn't difficult to derive; alternately the Barrier-type IPM algorithm for MED is also given in section 11.8.3 in \cite{Boyd} (p. 618), wherein expression for the matrix elements of the Jacobian has been explicitly given.}.
\begin{equation}
\begin{split}
\label{Jacboyd}
 H_{kl,st} & \equiv \; \dfrac{\partial \; h_{kl}(\vv{y})  }{\partial \; y_{st}} = \; \dfrac{\partial^2 \; \left(  Tr(Z) - \dfrac{1}{w}\sum_{i=1}^{n} Log(Det(Z-p_i\ketbra{\psi_i}{\psi_i}))  \right)}{\partial \; y_{kl} \partial \; y_{st}}    \\    = \; & \dfrac{1}{w} \sum_{i=1}^{n} Tr\left( \left(Z-p_i \ketbra{\psi_i}{\psi_i} \right)^{-1} \frac{\sigma_{kl}}{\sqrt{2}}\left(Z-p_i \ketbra{\psi_i}{\psi_i} \right)^{-1}\frac{\sigma_{st}}{\sqrt{2}} \right),
\end{split}
\end{equation}

where $H_{kl,st}$ are the matrix elements of the Jacobian, as can be seen from equation \eqref{Jacboyd}. Let $\vv{\alpha} \in \mathbb{C}^{n^2}$ be some non-zero complex $n^2$-tuple, and let $A \equiv \sum_{i,j=1}^{n} \alpha_{ij} \frac{\sigma_{ij}}{\sqrt{2}}$. Then we have the equality

\begin{align}
\label{Hve}
 &  \sum_{k,l,s,t=1}^n \alpha_{kl}^* H_{kl,st} \alpha_{st}  \\  = & \frac{1}{w}  \sum_{i=1}^{n} Tr\left( \left( (Z - p_i \ketbra{\psi_i}{\psi_i})^{-\frac{1}{2}} A^\dag (Z - p_i \ketbra{\psi_i}{\psi_i})^{-\frac{1}{2}}   \right)\left( (Z - p_i \ketbra{\psi_i}{\psi_i})^{-\frac{1}{2}} A (Z - p_i \ketbra{\psi_i}{\psi_i})^{-\frac{1}{2}}   \right)  \right)> 0.\notag
\end{align}

This inequality is true for all non-zero $\vv{\alpha} \in \mathbb{C}^{n^2}$. Thus the Jacobian $H$, whose matrix elements are given in equation \eqref{Jacboyd}, is positive definite throughout the feasible region. Thus the \emph{only} stationary points in the feasible region can be local minima. But since $H > 0 $ throughout the feasible region, there can only be \emph{one} local minima in said region, i.e., the stationary point gives \emph{the} minima which we are searching for\footnote{There is another way to appreciate this: since the function $Tr(Z) - \frac{1}{w} \sum_{i=1}^{n} Log \left( Det \left(Z-p_i \ketbra{\psi_i}{\psi_i} \right) \right)$ is a convex function over the feasible region, there can only be one minima in said region, which corresponds to the point we want. The convexity of the Log-Determinant function $-Log \left( Det(X) \right)$ over the space $\{\text{all $n \times n$ matrices } X | X \ge 0\}$ is established in section 3.1 on p. 73 in \cite{Boyd}.}. 

Thus the inner iterations give us the minima point $Z^{(k)} = \sum_{i,j=1}^{n} y_{ij}^{(k)} \frac{\sigma_{ij}}{\sqrt{2}}$  corresponding to some weight factor $\frac{1}{w^{(k-1)}}$. After having found the minima point $Z^{(k)}$ in the $k$-th iteraction, the $k+1$-th iteration is commenced with changing the weight of the barrier function, i.e., $w^{(k-1)}$ $\longrightarrow$ $w^{(k)}> w^{(k-1)}$, and performing an unconstrained minimization of $Tr(Z) - \dfrac{1}{w^{(k)} }\sum_{i=1}^{n} Log(Det(Z-p_i\ketbra{\psi_i}{\psi_i}))$, starting from the point $Z^{(k)}$. These iterations are continued until the weight of the barrier function decreases to an insignificantly small number (i.e. given by the error tolerance). The final approximation $Z^{(k_{f})}$ is then declared as the solution.

We briefly outline the steps in the algorithm below. \bigskip

Let $\epsilon$ be the error tolerance for the algorithm. For starting the algorithm choose the following: the value of $\mu$ between $\sim 3$ to $100$, the weight $w^{(0)}$  $\sim$ $10$, the initial starting point for $Z$ as $Z^{(0)}=\mathbb{1}_n$, then set $k=1$ and iterate the following.

\begin{itemize}
\item[(1)] Perform unconstrained minimization of the function $Tr(Z) - \dfrac{1}{w^{(k-1)} }\sum_{i=1}^{n} Log(Det(Z-p_i\ketbra{\psi_i}{\psi_i}))$ with starting point as $Z = Z^{(k-1)}$ (using Newton's Method).
\item[(2)] Store the solution as $Z^{(k)}$. Update $w^{(k)}=\mu w^{(k-1)}$. 
\item[(3)] Stop when $w^{(k)}= \dfrac{n}{\epsilon}$.
\end{itemize}

The number of outer iterations for a given error tolerance is constant over $n$, but can vary with the factor $\mu$ by which the weights $w^{(k-1)}$ vary over the steps. \footnote{See section 11.5.3 of \cite{Boyd} for an upper bound on the number of outer iterations; particularly note figure 11.14. Also see the second example of section 11.6.3., figure 11.16 reveals the variation of the number of outer iterations with $\mu$.}. Thus the computational complexity of the algorithm is decided by the computational complexity of Newton's method within the inner iterations. In the following table we list the different steps as part of Newton's algorithm and give the computational complexity (time and space) for each step. \medbreak

\small
\begin{tabular}{ | p{1em} |p{25em}|p{7em}| p{7em} |}
 \hline
 \multicolumn{4}{|c|}{Computational Complexity for Newton's Method} \\
 \hline
 & Step in the algorithm     & \small{Time Complexity} & \small{Space Complexity} \\
 \hline
 1. & Computing values of $h_{ij}(\vv{y}^{(k-1)})$, by substituting $\vv{y}^{(k-1)}$ into equations \eqref{hij}, for all $ 1\le i,j \le n$.  & $\mathcal{O}(n^2)$    & $\mathcal{O}(n^2)$\\
 
 2. & Computing the Jacobian $H$, at the point $\vv{y}^{(k-1)}$  & $\mathcal{O}(n^5)$    & $\mathcal{O}(n^4)$\\
 
 3. & Computing the inverse of $H$ at the point $\vv{y}^{(k-1)}$  &   $\mathcal{O}(n^6)$  & $\mathcal{O}(n^4)$\\
 
 4. & Computing $y_{ij}^{(k)} = y_{ij}^{(k-1)} -\sum_{s,t=1}^{n} \left(H^{-1}\right)_{ij,st}h_{st}^{(k-1)}$, $\forall \; i,j, \le n$ & $\mathcal{O}(n^4)$ & $\mathcal{O}(n^2)$ \\
 
  \hline
\end{tabular}\normalsize \medbreak

\textbf{Comparing Different Methods:} The table above shows that the computational complexity of the Barrier-type IPM is as costly as the direct application of Newton's method. In fact, a closer analysis shows that directly applying Newton's method is less costly than the SDP method, along with the advantage of being simpler to implement. Also, the Taylor series method is as costly as both Newton's method and the SDP method, when one can find a gram matrix $G_0$ in the close vicinity of the given gram matrix $G_1$. If one is interested in a one-time calculation for an ensemble of LI pure states Newton's method is the most desirable method to implement among all the three examined here.

\section{Remarks and Conclusion}
\label{conclusion}

We showed how the mathematical structure of the MED problem for LI pure state ensembles could be used to obtain the solution for said problem. This was done by casting the necessary and sufficient conditions \eqref{St} and \eqref{Glb} into a rotationally invariant form which was employed to obtain the solution by using the implicit function theorem. We also showed that this technique is simpler to implement than standard SDP techniques. 

As mentioned in section \eqref{intro}, for fixed states $\ket{\psi_1},\ket{\psi_2},\cdots,\ket{\psi_n}$, $\mathscr{R}$ induces an invertible map on the space of probabilities, $\{p_i\}_{i=1}^n \longrightarrow \{q_i\}_{i=1}^{n}$. This naturally begs a question on whether there is a relation between the two probabilities, for example does one majorize the other? Or, more generally, is the entropy of $\{q_i\}_{i=1}^{n}$ always larger than the entropy of $\{p_i\}_{i=1}^{n}$ or vice versa? The answer to this question is that there doesn't seem to be any simple property relating these two probabilities, vis-a-vis, one majorizing the other or that the $(\{p_i\}_{i=1}^{n},\{q_i\}_{i=1}^{n})$-pair are either related by $H(p_i) \geq H(q_i)$ or $H(p_i) < H(q_i)$; examples of both cases can be found. 

In this paper we studied only about the case for $n$-LIP ensembles. Naturally there is the question if a similar theory holds for more general ensembles. For the case of $m$-linearly dependent pure state ensembles (where $m>dim \mathcal{H}=n$): it is explicitly shown that, while a map like $\mathscr{R}^{-1}$ exists on the space of $m$ linearly dependent pure (LDP) state ensembles, $\mathscr{R}^{-1}$ isn't one-to-one \cite{Carlos}\footnote{In that sense it defeats the purpose of denoting such a map by $\mathscr{R}^{-1}$, because $\mathscr{R}^{-1}$ doesn't have an \emph{inverse}.}. From the analysis in our paper, it is clear that the one-to-one nature of the map $\mathscr{R}^{-1}$ (for $n$-LIPs) plays a crucial in formulating the rotationally invariant necessary and sufficient conditions for MED of said ensemble of states; thus it also plays a crucial role in the application of this necessary and sufficient condition to obtain the solution for MED of said ensemble. The non-invertibility of $\mathscr{R}^{-1}$ 
also shows that the optimal POVM won't necessarily vary smoothly as one varies the ensemble from one $m$-LDP to another $m$-LDP. C. Mochon gave algebraic arguments for this \cite{Carlos} in his paper, and Ha et al. showed the same using the geometrical arguments for ensembles of three qubit states, as an example \cite{Ha}. This has been shown for general qudits as well \cite{Kwon}. Besides this, there is also the fact that there are some LDPs for which the optimal POVM isn't even unique, i.e., two or more distinct POVMs give the maximum success probability for MED. This means that as the ensemble is varied in the neighbourhood of said ensemble, the optimal POVM can undergo discontinuous \emph{jumps}. Hence, we conclude that such the technique which was used in section \ref{empIFT} for $n$-LIPs can't be generalized to $m$-LDPs. Work is currently under progress to see if such a technique can be generalized to mixed state ensembles of linearly independent states \cite{SGMix14}.

\paragraph{Acknowledgements}
The authors wish to thank Dr. R. Simon and Jon Tyson for meaningful discussions and insightful remarks. 

\bibliographystyle{hplain}
\bibliography{MED}

\end{document}